\documentclass[runningheads,envcountsame,envcountsect]{llncs}

\usepackage{graphicx}
\usepackage{algorithmic}
\usepackage{algorithm}
\usepackage{amsmath}
\usepackage{amssymb}
\usepackage{amsfonts}
\usepackage{amsxtra}
\usepackage{enumitem}
\usepackage{newtxtext}
\usepackage{newtxmath}
\usepackage{mathtools}
\usepackage{etoolbox}
\usepackage{booktabs}
\usepackage{adjustbox}
\usepackage{siunitx}

\usepackage[round]{natbib}

\spnewtheorem{fact}[theorem]{Fact}{\bfseries}{\itshape}
\numberwithin{theorem}{section}
\numberwithin{equation}{section}

\newcommand{\pcite}[1]{\citeauthor{#1}'s (\citeyear{#1})} 

\makeatletter
\let\c@algorithm\c@theorem

\makeatother

\usepackage[svgnames]{xcolor}
\definecolor{darkgreen}{rgb}{0,.35,0}
\definecolor{darkblue}{rgb}{0,0,.1}
\definecolor{darkred}{rgb}{.6,0,0}

\makeatletter
\newcommand{\papertitle}{\@title}  
\newcommand{\paperauthor}{\@author}
\makeatother

\makeatletter
\newcommand{\ReDeclareMathOperator}{%
  \@ifstar{\@declmathop\@empty}{\@declmathop o}}
\long\def\@declmathop#1#2#3{%
  \DeclareRobustCommand{#2}{\qopname\newmcodes@#1{#3}}}
\makeatother

\DeclareTextFontCommand{\boldemph}{\bfseries\em}

\newcommand{\softO}{{O\mskip1mu\tilde{\,}\mskip1mu}}

\newcommand{\ZZ}{{\mathbb{Z}}}
\newcommand{\FF}{{\mathbb{F}}}
\newcommand{\F}{{\mathsf{F}}}
\newcommand{\R}{{\mathsf{R}}}

\newcommand{\Rhat}{%
  \mathchoice
    {\smash{\tilde{\mathsf{R}}}\vphantom{\mathsf{R}}}
    {\smash{\tilde{\mathsf{R}}}\vphantom{\mathsf{R}}}
    {\smash{\tilde{\scriptstyle\mathsf{R}}}\vphantom{\scriptstyle\mathsf{R}}}
    {\smash{\tilde{\scriptscriptstyle\mathsf{R}}}\vphantom{\scriptscriptstyle\mathsf{R}}}%
}

\newcommand{\rank}{\operatorname{rank}}
\newcommand{\SNF}{\operatorname{SNF}}
\newcommand{\diag}{\operatorname{diag}}
\newcommand{\GL}{\operatorname{GL}}
\newcommand{\GR}{\operatorname{GR}}
\newcommand{\im}{\operatorname{im}}

\newcommand{\ceil}[1]{\left\lceil #1\right\rceil}

\newcommand{\nxn}{{n\times n}}
\newcommand{\trans}{\intercal}

\usepackage{hyperref}

\hypersetup{
  colorlinks=true,
  linkcolor=blue!50!black,
  citecolor=blue!50!black,
  urlcolor=blue!60!black,
  pdftitle={Computing Smith Forms Modulo p^2 of Sparse Matrices Faster Than Matrix Multiplication},
  pdfauthor={Mark Giesbrecht}
}

\newcommand{\tightdisplay}[1]{%
  \begingroup
  \setlength{\abovedisplayskip}{4pt}%
  \setlength{\belowdisplayskip}{4pt}%
  \setlength{\abovedisplayshortskip}{2pt}%
  \setlength{\belowdisplayshortskip}{2pt}%
  \[
  #1
  \]
  \endgroup\ignorespaces
}

\begin{document}

\title{Computing Smith Forms Modulo $p^2$ of Sparse Matrices Faster
  Than Matrix Multiplication}

\titlerunning{Smith Forms Modulo \texorpdfstring{$p^2$}{p^2}}

\author{Mark Giesbrecht}
\authorrunning{M. Giesbrecht}
\institute{Cheriton School of Computer Science, Faculty of Mathematics, University of Waterloo, Canada\\
  \email{mwg@uwaterloo.ca}}

\maketitle

\begin{abstract}
  Let $p$ be a prime and $\R=\ZZ/p^2\ZZ$ the ring of integers modulo
  $p^2$.  Any $A\in\R^\nxn$ is unimodularly equivalent to its Smith
  form
  \[
    S=\diag\bigl(\underbrace{1,\ldots,1}_{r_0},
    \underbrace{p,\ldots,p}_{r_1}, \underbrace{0,\ldots,0}_{r_2}\bigr)
    \in\R^\nxn,
  \]
  i.e., there exist $U,V\in\R^\nxn$ such that $UAV=S$, with
  $\det U,\det V\in\R^*$ (where $\R^*$ is the set of units in $\R$,
  elements not equivalent to $0\bmod p$).  Our goal in this paper is
  to determine $r_0,r_1,r_2$ quickly when $A$ is sparse or structured.
  By ``sparse'' we mean $A$ is given by a black box such that for any
  $v\in\R^{n\times 1}$ we can compute $v\mapsto Av$ with $\softO(n)$
  operations in $\R$, which captures having few nonzero elements or a
  multiplicative structure (e.g., Hankel or Toeplitz matrices).  We
  present a randomized algorithm which requires an expected number of
  \tightdisplay{
    \softO\left(n^{3-1/(\omega-1)}\right)
    \hspace*{10pt}\mbox{operations in $\R$}
  }
  to compute the Smith form, where $\omega$ is the exponent of dense
  matrix multiplication.  Using standard cubic matrix multiplication
  ($\omega=3$) our algorithm thus requires $\softO(n^{2.5})$
  operations in $\R$, while using the current asymptotically fastest
  matrix multiplication, with $\omega<2.371339$, our algorithm
  requires $\softO(n^{2.270786})$ operations in $\R$.  Our algorithm
  is probabilistic of the Monte Carlo type, meaning it fails on any
  invocation with controllably small probability.  We employ iterative
  block-Wiedemann-style matrix techniques and structured
  preconditioners.  To our knowledge, this is the first algorithm to
  compute the modular Smith Normal Form modulo $p^2$ requiring fewer
  than $\softO(n^\omega)$ operations in $\R$, i.e., faster than any
  dense algorithm.

  \keywords{Smith normal form \and sparse linear algebra \and
    black-box matrices \and block Wiedemann algorithm \and local rings
    \and matrix multiplication complexity}
\end{abstract}

\footnotetext{To appear,
  \href{https://www.casc-conference.org/index.html}{Computer Algebra
    in Scientific Computation (CASC)} conference, August
  31--September 4, 2026, Bath, UK}

\section{Introduction}

The Smith normal form ($\SNF$), introduced by \cite{Smi61},
diagonalizes matrices over $\ZZ$ and has applications in
diophantine analysis~\citep{ChoCol82}, integer
programming~\citep{Hu69}, combinatorics~\citep{Sta16}, determining the
structure of abelian groups \citep{New72} and class
groups~\citep{Haf89}, in control system theory~\citep{Kai80}, and
especially computing simplicial homology~\citep{DumHec03}.  To deal
with large instances of these problems we must exploit sparsity in the
matrix (see, e.g., \cite{GiuGhr16} in an application to neural
data). This remains challenging.

The problem of computing the Smith form of a sparse matrix over a
principal ideal ring presents several challenges.  The mathematics
goes back to \cite{Kap49}. One approach is to simply compute the SNF
over $\ZZ$ and then reduce the result modulo the power of the prime
ideal. The algorithm of \cite{EbeGie07} for SNF of a sparse matrix
over $\ZZ$ could be used, which reduces the cost using standard matrix
arithmetic over $\ZZ$ to $\softO(n^{2.66})$ bit operations, but does
not seem directly amenable to asymptotic speedup.  The best known
algorithm for any matrix over any $\ZZ/(d)$, for any $d\in\ZZ$, is by
\cite{Sto96} and requires $\softO(n^\omega)$ operations mod $d$, and
is not sensitive to sparsity.  The ``valence'' algorithm of
\cite{DumHec03} uses a hybrid of a ``global'' method over $\ZZ$ and a
(dense) $p$-adic elimination method for selected primes $p$.  Our
method here may be useful as a subroutine in this approach.

We will address a simple but important case here to attempt to break
through this barrier: integer matrices modulo $p^2$ for a prime $p$.
For matrices over $\R=\ZZ/p^2\ZZ$, there are only three possible Smith
factors, and the SNF has shape
\tightdisplay{
  S=\diag\bigl(\underbrace{1,\ldots,1}_{r_0},
  \underbrace{p,\ldots,p}_{r_1}, \underbrace{0,\ldots,0}_{r_2}\bigr)
  = \diag\bigl(1^{r_0},p^{r_1},0^{r_2}\bigr)
  \in\R^\nxn.
}
The first multiplicity is immediate:
\tightdisplay{
  r_0=\rank(A\bmod p),
}
which can be identified by reducing the matrix to the prime field
$\FF_p$ and using well-established algorithms such as \cite{KalSau91},
which require $\softO(n^2)$ operations in $\R$.  The difficult part is
to compute the number of invariant factors equal to $p$ without
forming dense nullspace bases, dense cokernel bases, or a dense Schur
complement, all of which seem expensive to compute, even for
sparse matrices.

The goal of this paper is, given a black box matrix over $\R$ as
above, to find the SNF multiplicities (i.e., $r_0,r_1,r_2$) with
fewer than $\softO(n^\omega)$ operations in $\R$.  We assume that two
$n\times n$ matrices can be multiplied with $O(n^\omega)$ operations,
where $\omega=3$ using classical methods and $\omega<2.371339$ using
the fastest current methods \cite{AlmDua25}.

The key new idea is to split off the unit part modulo $p$.  After fast
rank-profile preconditioning, we write an equivalent matrix as
\tightdisplay{
  A'=\begin{bmatrix}
    B&C\\
    D&E
  \end{bmatrix},
  \qquad B\bmod p~~\text{nonsingular},\qquad \dim B=r_0.
}
The Schur complement
\[
  J=E-DB^{-1}C \bmod p^2
\]
is divisible by $p$.  Thus $J=pT$ for a matrix $T$ over the residue
field, and
\tightdisplay{
  r_1=\rank(T),\qquad r_2=n-r_0-r_1.
}
The matrix $T$ is never formed.  To evaluate $TX$, lift $X$ to
$\widetilde X$, solve $BY=C\widetilde X\bmod p^2$, and return
\tightdisplay{
  \frac{E\widetilde X-DY}{p}\bmod p,
}
or equivalently the lower block of
\tightdisplay{
  A' \begin{bmatrix}-Y\\
    \widetilde X
  \end{bmatrix}
}
divided by $p$.  We show how to perform the solve by Hensel lifting,
with residuals computed in the length-two ring before division.

Another important tool we use is efficient sparse blocking.  The
expensive solves with $B\bmod p$ are carried out by the sparse-block
solve machinery of \cite{EbeGie06,EbeGie07}, which provides efficient
block projections.  To avoid relying on a sparse-projection version of
arbitrary-rank block Wiedemann, we compute rank by \pcite{Wie86}
binary-search reduction to nonsingularity tests, using the
PRECONDSXS/rank-profile conditioners of \cite{CheEbe02}.  Each
nonsingularity test is a verified sparse-block solve attempt.

In the nontrivial case $0<r_0<n$, the sparse-block rank computation
will use optimized block size
\[
  t=\min\{r_0,n-r_0,\lceil n^{1/(\omega-1)}\rceil\}.
\]
The resulting Monte Carlo complexity is
\tightdisplay{
  \softO\left(n^{3-1/(\omega-1)}\right)
  \hspace*{10pt}\mbox{operations in $\R$}.
}
Since $3-1/(\omega-1)<\omega$ for every $2<\omega\leq 3$, this is
asymptotically below dense $n^\omega$ arithmetic.  The rank
computations are Monte Carlo; the sparse-block linear solves inside
the algorithm are verified by substitution.

The problem is closely related to earlier algorithms for Smith forms
over local rings investigated in \citep{ElsGie12}.  The contribution
here is the $p^2$ specialization in which the matrix $T=J/p\bmod p$ can
be accessed by a carry-safe block oracle, and sparse-block solves
amortize the cost of the unit-block inversions enough to obtain a
uniform sub-$n^\omega$ bound for arbitrary sparse input matrices.

The remainder of the paper gives the local algebra, the finite-field
sparse-block rank/solve procedures, the carry-safe oracle, the correctness
proof, and the complexity analysis.  

\section{Notation, model, and output}

The base ring for our computations is $\R=\ZZ/p^2\ZZ$.  If the residue
field $\R/p\R$ is too small for the randomized field preconditioners,
we extend scalars to the unramified Galois ring $\Rhat=\GR(p^2,d)$,
whose residue field is $\F=\Rhat/p\Rhat\cong \FF_{p^d}$.  When no
extension is needed, take $d=1$ and $\Rhat=\R$, $\F=\FF_p$.  We will
always count operations over $\R$ (or $\F$, as less expensive than
$\R$).  Since any extension used has degree
$d=O(\log n + \log(1/\epsilon))$ we will assume these are captured by
the polylogarithmic constants in the $\softO(\cdot)$.

The maximal ideal of $\Rhat$ is $p\Rhat$ and $p^2=0$.  Multiplication
by $p$ induces an isomorphism of $\F$-vector spaces
$ \Rhat/p\Rhat \cong p\Rhat$, $\bar x\mapsto px$.  Thus if
$z\in p\Rhat$, the notation $z/p\bmod p$ is unambiguous.

For our black box model we assume only that, for any vector
$v\in\R^{n\times 1}$, we can evaluate $v\mapsto Av$ with $\softO(n)$
operations in $\R$ (and similarly over $\Rhat$).  For a block
$V\in\R^{n\times s}$, applying this product to each column gives
$V\mapsto AV$ in $\softO(ns)$ operations.

If the input black box is supplied only over the base ring $\R$ and
the algorithm works over $\Rhat$, fix an $\R$-basis of the free
unramified extension $\Rhat/\R$.  Since the entries of $A$ lie in
$\R$, a product with a vector in $\Rhat^n$ decomposes into $d$
products with vectors in $\R^n$, plus $\softO(nd)$ scalar operations
for basis conversion and recombination.  In the contexts below
$d=O(\log n+\log(1/\epsilon))$ for fixed or word-sized $p$, and this
overhead is absorbed into the $\softO(\cdot)$ bounds.

Over $\R$ or $\Rhat$, every Smith form modulo $p^2$ has the shape
$\diag(1^{r_0},p^{r_1},0^{r_2})$.  Our algorithm computes the
multiplicities $r_0,r_1,r_2$.  \emph{Note that we do not compute the
  unimodular transformation matrices.}

The following shows that moving to an enlarged ring does not change
the Smith form.
\begin{lemma}
  \label{lem:extend}
  Let $A\in \R^{n\times n}$ and extend scalars to $\Rhat=\GR(p^2,d)$.
  If
  \[
    \SNF_\R(A)=\diag(1^{r_0},p^{r_1},0^{r_2}),
  \]
  then the Smith form of $A$ over $\Rhat$ has the same multiplicities.
\end{lemma}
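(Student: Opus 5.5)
The plan is to push the unimodular transformation over $\R$ forward along the inclusion $\R\hookrightarrow\Rhat$, and then check that Smith forms over $\Rhat$ are unique with the same three-part shape. Since $\SNF_\R(A)=\diag(1^{r_0},p^{r_1},0^{r_2})$, there exist $U,V\in\R^\nxn$ with $UAV=S$ and $\det U,\det V\in\R^*$. The inclusion $\R\subseteq\Rhat$ is a ring homomorphism, so the same identity $UAV=S$ holds in $\Rhat^\nxn$. I then need to check that $U$ and $V$ are still unimodular over $\Rhat$. The element $\det U$ is a unit of $\R$, so its inverse in $\R$ is also an inverse in $\Rhat$. Equivalently, $\det U\not\equiv 0\bmod p$, and the elements of $\Rhat$ not in the maximal ideal $p\Rhat$ are exactly the units, since $\Rhat$ is local. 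Hence $A$ is equivalent over $\Rhat$ to the same diagonal matrix $S$.

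The remaining point, which is the only real content, is that the multiplicities $r_0,r_1,r_2$ are determined by $A$ over $\Rhat$. This rules out $A$ having some other Smith form $\diag(1^{r_0'},p^{r_1'},0^{r_2'})$ over $\Rhat$. I will prove this by giving invariant characterizations, avoiding any appeal to general uniqueness theory over principal ideal rings.
\begin{itemize}
\item First, $r_0=\rank_\F(A\bmod p)$. Reducing $UAV=S'$ modulo $p$ gives an equivalence over the field $\F$, with invertible $\bar U,\bar V$, to $\diag(1^{r_0'},0^{n-r_0'})$. Hence $r_0'=\rank_\F(\bar A)=\rank_{\FF_p}(\bar A)$, because field rank is invariant under field extension.
\item Second, $r_2$ is determined by the annihilator $\{x\in\Rhat^n : Ax=0\}$, or more simply by the $\F$-dimension of the image of $A$ modulo $p$-torsion. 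Cleanest is to use $\dim_\F(\ker A\cap \ldots)$; more concretely, I would count the length (composition length) of the cokernel $\Rhat^n/A\Rhat^n$. Under equivalence this cokernel is isomorphic to $\Rhat^n/S'\Rhat^n\cong(\Rhat/p\Rhat)^{r_1'}\oplus\Rhat^{r_2'}$, which has length $r_1'+2r_2'$.
\end{itemize}
Length is an isomorphism invariant of the cokernel, and $r_0'+r_1'+r_2'=n$. So $r_0'$ and $r_1'+2r_2'$ together determine $r_1'$ and $r_2'$.

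To compare with the values over $\R$, note that $\Rhat$ is free of rank $d$ over $\R$. Hence $\mathrm{coker}_\Rhat(A)=\mathrm{coker}_\R(A)\otimes_\R\Rhat$, which is isomorphic to $(\Rhat/p\Rhat)^{r_1}\oplus\Rhat^{r_2}$ by the displayed equivalence anyway. So the length invariant over $\Rhat$ equals $r_1+2r_2$, which forces $r_1'=r_1$ and $r_2'=r_2$.

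The main obstacle is this uniqueness step, since existence transfers trivially. The invariant route above (rank mod $p$ plus cokernel length) avoids needing to know that $\Rhat$ is a principal ideal ring with unique Smith form, though that fact also holds because $\Rhat$ is a finite chain ring.
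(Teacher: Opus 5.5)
Your proof is correct. The existence half matches the paper's: transport $UAV=S$ along $\R\hookrightarrow\Rhat$ and note that the determinants stay units.

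The difference lies in the uniqueness half. The paper observes that $\Rhat$ is again a length-two principal local ring with maximal ideal $p\Rhat$, and then invokes the standard uniqueness of Smith diagonals over such rings. You instead prove the needed uniqueness from scratch with two equivalence invariants:
\begin{itemize}
\item $\rank_\F(\bar A)$, which equals $r_0$;
\item the $\Rhat$-length of $\Rhat^n/A\Rhat^n$, which equals $r_1+2r_2$.
\end{itemize}
Together with $r_0+r_1+r_2=n$, these pin down all three multiplicities.

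The paper's route is shorter but leans on the general theory of Smith forms over principal ideal rings. Yours is self-contained. It also shows directly that the multiplicities are intrinsic to $A$, and it ties into the characterization $r_0=\rank_\F(\bar A)$ that the algorithm uses.

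One point your argument handles implicitly deserves to be said explicitly: the length must be taken over $\Rhat$, not over $\R$. Over $\R$ it would be multiplied by $d$, while $\rank_\F$ is not. This is the same point the paper makes about ranks being $\F$-dimensions.

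Two tidying remarks.
\begin{itemize}
\item Delete the dangling sentence about $\dim_\F(\ker A\cap\ldots)$.
\item The tensor-product identification of the cokernel is unnecessary. By your existence step, both $S$ and $S'$ are Smith forms over $\Rhat$, so applying your two invariants to each of them over $\Rhat$ already gives $r_i'=r_i$. If you want to avoid Jordan--H\"older, the cardinality $|\Rhat^n/A\Rhat^n|=|\F|^{r_1+2r_2}$ serves equally well.
\end{itemize}
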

\begin{proof}
  Write a Smith decomposition over $\R$ as
  \tightdisplay{
    A=U\diag(1^{r_0},p^{r_1},0^{r_2})V,
    \qquad U,V\in\GL_n(\R).
  }
  After scalar extension, the same equality holds over $\Rhat$, and the
  images of $U$ and $V$ remain invertible because their determinants
  remain units.  Thus the displayed diagonal is a Smith diagonal for $A$
  over $\Rhat$.
  
  The ring $\Rhat$ is again a length-two principal local ring with
  maximal ideal generated by the same element $p$.  Hence its Smith
  diagonal entries are uniquely determined up to multiplication by
  units and are represented by $1$, $p$, and $0$.  Therefore the
  multiplicities of these three types are unchanged.  Notice that
  ranks in the extended algorithm are measured over
  $\F=\FF_{p^d}$; they are not multiplied by $d$ when viewed as
  $\F$-dimensions. \qed
\end{proof}

To avoid technicalities, the reader can generally assume that our ring
$\R$ has a sufficiently large residue field that the algorithms work
correctly, but explicit remediation for this is discussed below.

\section{The local algebra modulo $p^2$}

Let $\bar A=A\bmod p\in \F^{n\times n}$.  If
\tightdisplay{
  \SNF(A)=\diag(1^{r_0},p^{r_1},0^{r_2}),
}
then
\tightdisplay{
  \bar A\sim \diag(I_{r_0},0,0)
}
over $\F$.  Therefore
\[
  r_0=\rank_\F(\bar A).
\]
This part is immediate and is computed by a finite-field sparse rank
algorithm with $\softO(n^2)$ operations in $\R/(p)$.

Let
\[
  a=\rank_\F(\bar A),\qquad k=n-a.
\]
Suppose, after invertible row and column operations over $\Rhat$, we
have
\tightdisplay{
  A'=
  \begin{bmatrix}
    B&C\\
    D&E
  \end{bmatrix},
  \qquad B\in \Rhat^{a\times a},
}
with $\bar B=B\bmod p$ nonsingular.

\begin{lemma}[Schur complement reduction]
  \label{lem:schur}
  The matrix $B$ is invertible over $\Rhat$, and $A'$ is equivalent over
  $\Rhat$ to
  \tightdisplay{
    \begin{bmatrix}
      B&0\\
      0&J
    \end{bmatrix},
    \qquad J=E-DB^{-1}C\bmod p^2.
  }
  Moreover $J\equiv 0\bmod p$.
\end{lemma}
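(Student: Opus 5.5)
The proof has three steps: invert $B$, block-eliminate, and then compare ranks modulo $p$.

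\emph{Step 1: $B$ is invertible.} Since $\Rhat$ is local with maximal ideal $p\Rhat$, an element of $\Rhat$ is a unit exactly when its image in $\F$ is nonzero. Reduction modulo $p$ is a ring homomorphism, so $\det(B)\bmod p=\det(\bar B)\neq 0$. Hence $\det B\in\Rhat^*$, and the adjugate formula $B^{-1}=(\det B)^{-1}\operatorname{adj}(B)$ gives $B^{-1}\in\Rhat^{a\times a}$.

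\emph{Step 2: block elimination.} Set
\[
  L=\begin{bmatrix} I_a&0\\ -DB^{-1}&I_k\end{bmatrix},\qquad
  M=\begin{bmatrix} I_a&-B^{-1}C\\ 0&I_k\end{bmatrix}.
\]
Both are block unitriangular, so their determinants equal $1$ and they lie in $\GL_n(\Rhat)$. A direct multiplication gives
\[
  LA'M=\begin{bmatrix} B&0\\ 0&E-DB^{-1}C\end{bmatrix}=\begin{bmatrix} B&0\\ 0&J\end{bmatrix}.
\]
This is the claimed equivalence over $\Rhat$.

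\emph{Step 3: $J\equiv 0\bmod p$.} This is the only step with real content. I would argue by rank over $\F$. The matrix $A'$ was obtained from $A$ by invertible row and column operations over $\Rhat$, so $\bar A'$ is equivalent to $\bar A$ over $\F$. Therefore $\rank_\F(\bar A')=a$.

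Reducing the identity $LA'M=\diag(B,J)$ modulo $p$ gives
\[
  \bar L\,\bar A'\,\bar M=\diag(\bar B,\bar J),
\]
where $\bar L$ and $\bar M$ are invertible over $\F$. Comparing ranks,
\[
  a=\rank_\F\diag(\bar B,\bar J)=a+\rank_\F(\bar J),
\]
using that $\bar B$ is nonsingular of size $a$. So $\rank_\F(\bar J)=0$, which means $\bar J=0$, i.e.\ every entry of $J$ lies in $p\Rhat$.

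The one point needing care in Step 3 is that the rank of $A$ modulo $p$ is preserved under the operations producing $A'$. This holds because the transformation matrices have unit determinants, so their reductions modulo $p$ are invertible over $\F$.
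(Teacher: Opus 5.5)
Your proposal is correct and follows the paper's proof essentially step for step: the unit determinant in the local ring $\Rhat$ gives invertibility of $B$, the same unimodular block-elimination matrices give the equivalence to $\diag(B,J)$, and reducing modulo $p$ and comparing ranks forces $\bar J=0$. You are slightly more explicit than the paper about why $\rank_\F(\bar A')=a$: the reductions of the transformation matrices are invertible over $\F$.
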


\begin{proof}
  Since $\bar B$ is nonsingular over the residue field, $\det(B)$ is a
  unit in the local ring $\Rhat$, so $B$ is invertible.  Multiplication
  by the unimodular block matrices
  \[
    \begin{bmatrix}
      I&0\\
      -DB^{-1}&I
    \end{bmatrix}
    \quad\text{and}\quad
    \begin{bmatrix}
      I&-B^{-1}C\\
      0&I
    \end{bmatrix}
  \]
  gives
  \[
    \begin{bmatrix}
      I&0\\
      -DB^{-1}&I
    \end{bmatrix}
    \begin{bmatrix}
      B&C\\
      D&E
    \end{bmatrix}
    \begin{bmatrix}
      I&-B^{-1}C\\
      0&I
    \end{bmatrix}
    =
    \begin{bmatrix}
      B&0\\
      0&E-DB^{-1}C
    \end{bmatrix}.
  \]
  Modulo $p$, the same block elimination is valid over $\F$.  Since
  $\rank(\bar A')=a$ and $\bar B$ has rank $a$, the lower-right Schur
  complement modulo $p$ must have rank zero.  Hence
  $J\equiv 0\bmod p$. \qed
\end{proof}

Thus, there is a unique matrix $T\in \F^{k\times k}$ such that $J=pT$.

\begin{lemma}[The Schur complement divided by $p$ yields $r_1$]
  \label{lem:divided-schur-rank}
  With the notation above,
  \tightdisplay{
    r_1=\rank_\F(T),\qquad r_2=k-\rank_\F(T).
  }
\end{lemma}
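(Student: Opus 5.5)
By Lemma~\ref{lem:schur}, $A'$ and hence $A$ is equivalent over $\Rhat$ to $\diag(B,J)$ with $J=pT$. By Lemma~\ref{lem:extend}, the Smith multiplicities of $A$ over $\Rhat$ are $r_0,r_1,r_2$. The plan is to compute the Smith form of $\diag(B,J)$ directly from $T$, and then appeal to uniqueness of Smith multiplicities over the length-two principal local ring $\Rhat$ (already used in the proof of Lemma~\ref{lem:extend}).

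First, $B$ is invertible over $\Rhat$. So $\diag(B^{-1},I_k)\cdot\diag(B,J)=\diag(I_a,J)$, and it remains to diagonalize $J$.

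Next, let $\rho=\rank_\F(T)$. Over the field $\F$ there are $\bar P,\bar Q\in\GL_k(\F)$ with $\bar P T\bar Q=\diag(I_\rho,0)$. Lift $\bar P,\bar Q$ entrywise to matrices $P,Q$ over $\Rhat$. Their determinants reduce to nonzero elements of $\F$, so they are units in the local ring $\Rhat$, and $P,Q\in\GL_k(\Rhat)$.

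The one point needing care is that $pT$ depends only on $T\bmod p$. Choose any lift $\widetilde T$ of $T$, so that $J=p\widetilde T$. Then
\[
PJQ=p\,P\widetilde TQ,
\]
and $P\widetilde TQ\equiv\diag(I_\rho,0)\pmod p$. Since $p\cdot p\Rhat=0$, multiplying by $p$ kills the discrepancy, giving $PJQ=\diag(pI_\rho,0)$. I expect this carry/ambiguity issue to be the only subtle step, and it is settled by the isomorphism $\Rhat/p\Rhat\cong p\Rhat$ noted earlier.

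Combining these, $A\sim\diag(1^{a},p^{\rho},0^{k-\rho})$ over $\Rhat$. By uniqueness of Smith multiplicities, $r_1=\rho$, and then
\[
r_2=n-r_0-r_1=k-\rho.
\]
This also recovers $r_0=a$, consistent with the earlier observation that $r_0=\rank_\F(\bar A)$.
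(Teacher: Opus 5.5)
Your proof is correct and follows essentially the same route as the paper: reduce to the block $J=pT$ via Lemma~\ref{lem:schur}, then diagonalize $T$ over $\F$. You then lift the transforming matrices to $\GL_k(\Rhat)$ and use $p\cdot p\Rhat=0$ to conclude that $pT$ is equivalent to $\diag(pI_\rho,0)$. You are slightly more explicit than the paper in invoking uniqueness of the Smith multiplicities and in deriving $r_2=n-r_0-r_1=k-\rho$, but the argument is the same.
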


\begin{proof}
  The block $B$ is invertible over $\Rhat$, so it contributes exactly
  $a$ unit invariant factors.  It remains to compute the Smith form of
  $J=pT$.

  Let $b=\rank_\F(T)$.  Choose $U_0,V_0\in\GL_k(\F)$ such that
  \tightdisplay{
    U_0 T V_0=\diag(I_b,0).
  }
  Lift $U_0,V_0$ to matrices $\widetilde U,\widetilde V\in\GL_k(\Rhat)$,
  and choose any lift $\widetilde T\in\Rhat^{k\times k}$ of $T$.  Since
  multiplication by $p$ kills all terms divisible by $p$,
  \tightdisplay{
    \widetilde U\,(p\widetilde T)\,\widetilde V
    =p\,\widetilde{(U_0TV_0)} =\diag(pI_b,0)\pmod {p^2}.
  }
  Thus $pT$ has $b$ invariant factors equal to $p$ and $k-b$ invariant
  factors equal to $0$. \qed
\end{proof}

\section{Adapting the fast block matrix tools}
\label{sec:finite-field-engine}

This section adapts and summarizes the block matrix tools we need for
our new algorithm.  We make no claim of novelty here, but provide a
summary of methods adapted to our particular context.  Our finite
field routines are based on the algorithms of
\cite{EbeGie06,EbeGie07}, which we will sometimes refer to as
\emph{EGGSV}.  Rank is reduced to the binary-search nonsingularity
tests of \cite{Wie86}, using the preconditioners of \cite{CheEbe02},
and each nonsingularity test is implemented by an EGGSV sparse-block
solve attempt with a random right-hand side.

The only black-box operation in the Smith algorithm is a matrix-vector
product with the original matrix $A$, assumed to cost $\softO(n)$
operations in $\R$.  Whenever we need to apply $A$ to an $n\times s$
block, we apply the black box to the columns, at cost $\softO(ns)$.
The finite-field statements below are written for an auxiliary matrix
$M\in \F^{m\times m}$ for which one vector product $v\mapsto Mv$ costs
$\softO(L)$ operations.  Thus an $m\times s$ block product by $M$
costs $\softO(Ls)$ by columnwise application.  In the Smith algorithm
this will be used with $L=n$ for leading submatrices of the
preconditioned sparse matrix.  The divided Schur complement $T$ is not
treated as a new input black box.  Section~\ref{sec:T-oracle}
constructs its block oracle from the original black box for $A$ and
from the verified sparse-block solves with~$B_0$.

Throughout this section, $1\le s\le m$ is a block size.  If
$s\nmid m$, pad by fewer than $s$ dimensions.  For nonsingular solves
use a direct-sum identity block, and for rank tests use a direct-sum
zero block, or an identity block and subtract its known rank.  This
changes all costs by at most constant factors.  Thus, we write $m=qs$.
The sparse-block projection used by EGGSV is
\[
  u=\begin{bmatrix}
    I_s\\
    \vdots\\
    I_s
  \end{bmatrix}\in \F^{m\times s},
\]
so multiplication by $u$ or $u^\trans$ on an $m\times s$ block costs
$O(ms)$ operations, not $O(ms^2)$ operations in $\F$.

The sparse-block projection and inverse factorization method of
\cite{EbeGie06,EbeGie07} is at the foundation of our algorithm and
reduced complexity.  We summarize its adapted use, features and
complexity as follows.

\begin{theorem}[Efficient block projections and inverse factorization]
  \label{thm:eggsv-projections}
  Let $N\in \F^{m\times m}$ be nonsingular and $m=qs$.  After fast
  rank-profile conditioning over a sufficiently large field, the
  sparse projection $u$ above has the following properties with
  controllably high probability.
  \begin{enumerate}[label=(\roman*),leftmargin=2.5em]
  \item The right block Krylov matrix
    \[
      K^{(r)}=K_q(\widehat N,u)=[u,\widehat N u,\ldots,\widehat
      N^{q-1}u]
    \]
    and the block-row Krylov matrix
    \[
      K^{(\ell)}=
      \begin{bmatrix}u^\trans\\
        u^\trans\widehat N\\
        \vdots\\
        u^\trans\widehat N^{q-1}
      \end{bmatrix}
    \]
    are nonsingular.  Here $\widehat N$ is obtained from $N$ by fast
    invertible conditioners and a block diagonal random scaling.
  \item The block-Hankel matrix
    \[
      H=K^{(\ell)}\widehat N K^{(r)}
    \]
    is nonsingular and has $s\times s$ Hankel blocks.
  \item There is a factorization
    \[
      \widehat N^{-1}=K^{(r)}H^{-1}K^{(\ell)}.
    \]
    Undoing the fast conditioners gives the corresponding
    factorization of $N^{-1}$.
  \end{enumerate}
  Moreover, the conditioners apply to blocks in near-linear time, and
  the failure probability can be made at most $\epsilon$ by working
  over an extension field of degree $O(\log m+\log(1/\epsilon))$
  and/or by repetition.
\end{theorem}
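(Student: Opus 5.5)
The plan is to treat (iii) as a purely algebraic consequence of (i) and (ii), and to put the probabilistic work into (i), which is the content of \cite{EbeGie06,EbeGie07}; we adapt it rather than reprove it.

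\textbf{Step 1: the algebraic identity in (iii).} Suppose $K^{(r)}$ and $K^{(\ell)}$ are square and nonsingular. Both are $m\times m$, since $q$ blocks of $s$ columns (or rows) give $qs=m$. Then $H=K^{(\ell)}\widehat N K^{(r)}$ is a product of three nonsingular matrices, hence nonsingular, which is (ii). Inverting the product gives
\[
H^{-1}=(K^{(r)})^{-1}\widehat N^{-1}(K^{(\ell)})^{-1},
\]
and rearranging yields $\widehat N^{-1}=K^{(r)}H^{-1}K^{(\ell)}$. The block $(i,j)$ of $H$ is $u^\trans\widehat N^{\,i+j+1}u$, because $u^\trans \widehat N^{i}\cdot\widehat N\cdot\widehat N^{j}u=u^\trans\widehat N^{i+j+1}u$. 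It depends only on $i+j$, so $H$ is block Hankel with $s\times s$ blocks. Finally, write $\widehat N=P_1NP_2$, where $P_1,P_2$ are the invertible conditioners and scalings. Then $N^{-1}=P_2\widehat N^{-1}P_1$, which gives the factorization of $N^{-1}$ claimed in (iii).

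\textbf{Step 2: nonsingularity of the Krylov matrices in (i).} This is the main obstacle. I would follow the EGGSV argument.
\begin{itemize}
\item First condition $N$ so that all leading principal minors are nonzero, using the Toeplitz/butterfly preconditioners of \cite{CheEbe02}. These are invertible and apply to a block in $\softO(ms)$ operations.
\item Then take $\widehat N=D\,(\text{conditioned }N)$, with $D=\diag(d_1I_s,\ldots,d_qI_s)$ random, or, following EGGSV, a block-diagonal $D$ with random diagonal entries.
\item Show that $\det K_q(\widehat N,u)$, viewed as a polynomial in the entries of $D$, is not identically zero. It suffices to find one specialization with nonzero value. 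EGGSV exhibit one by an inductive argument on the block structure, using the nonvanishing leading minors; I would cite that lemma, adapted verbatim over $\F$.
\item Bound the degree of this determinant polynomial by $O(m^2)$, or more sharply by $O(mq)$, and apply Schwartz--Zippel. Over a field of size at least $c\,m^2/\epsilon$ the failure probability is at most $\epsilon$.
\item The row Krylov matrix $K^{(\ell)}$ is handled symmetrically, via $\widehat N^\trans$, or by a second independent scaling.
\end{itemize}

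\textbf{Step 3: cost and field size.} Each conditioner is either a structured matrix or a diagonal scaling, so applying it to an $m\times s$ block costs $\softO(ms)$; this is the near-linear claim. For the field-size requirement, the Schwartz--Zippel bound needs $|\F|\ge c\,m^2/\epsilon$. An extension $\FF_{p^d}$ with $d=O(\log m+\log(1/\epsilon))$ achieves this, consistent with Lemma~\ref{lem:extend}, since ranks and nonsingularity are preserved under scalar extension. Alternatively, repetition drives a constant failure probability down to $\epsilon$. A union bound over the constantly many bad events (the conditioners, $K^{(r)}$, and $K^{(\ell)}$) completes the argument.
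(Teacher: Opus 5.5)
Your proposal is correct and follows the paper's route. The paper's proof is purely a citation of EGGSV, and you likewise defer the Krylov nonsingularity in (i) to their leading-minor/diagonal-scaling lemma with a Schwartz--Zippel bound; your derivation of (ii)--(iii) from (i), including the block-Hankel entries $u^\trans\widehat N^{\,i+j+1}u$, is exactly the content of the cited EGGSV Theorem~3.1. One detail to tighten is the scaling: with a one-sided $\widehat N=DN'$, the matrix $K^{(\ell)}$ is governed by $\widehat N^\trans=(N')^\trans D$, which is scaled on the other side and so is not literally the transposed instance of the same lemma. Use a two-sided form such as $\widehat N=DN'D$, for which $\widehat N^\trans=D(N')^\trans D$ has the same shape and $(N')^\trans$ has the same leading minors, or use $D_1N'D_2$ in both Krylov matrices, since (iii) requires one and the same $\widehat N$.
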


\begin{proof}
  This is the sparse-block projection and inverse-factorization
  approach of \cite{EbeGie06,EbeGie07}, stated in our context and
  notation.  The sparse projection $u$ is the projection in
  \cite[(2.1)]{EbeGie07}.  The Krylov nonsingularity statements under
  leading-minor hypotheses are \cite[Theorem~2.1 and
  Corollary~2.2]{EbeGie07}; the passage to arbitrary nonsingular
  matrices over sufficiently large fields by fast rank-profile
  conditioning is \cite[Corollary~2.3 and Section~4]{EbeGie07}.  The
  block-Hankel inverse factorization is \cite[Theorem~3.1]{EbeGie07},
  and the Las Vegas inversion and inverse-times-matrix consequences
  are \cite[Theorem~4.2 and Corollary~4.4]{EbeGie07}.  \qed
\end{proof}

The following is the parameterized form of the EGGSV
inverse-times-matrix algorithm, specialized to one $m\times s$
right-hand-side block and to the sparse projection
$u=[I_s,\ldots,I_s]^\trans$.

\begin{theorem}
  \label{thm:param-solve}
  Let $M\in \F^{m\times m}$ be nonsingular, and suppose one vector
  product $v\mapsto Mv$ costs $\softO(L)$ operations.  For
  $C\in \F^{m\times s}$, there is a Las Vegas algorithm that computes
  $M^{-1}C$ with expected cost
  \[
    \softO\left(mL+m^2+m s^{\omega-1}\right).
  \]
\end{theorem}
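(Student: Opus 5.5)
We will prove Theorem~\ref{thm:param-solve} by applying the inverse factorization of Theorem~\ref{thm:eggsv-projections} directly to the right-hand side $C$. We evaluate
\[
  \widehat M^{-1}\widehat C=K^{(r)}H^{-1}K^{(\ell)}\widehat C
\]
from right to left, and never form $K^{(r)}$, $K^{(\ell)}$ or $H^{-1}$ as dense matrices. Throughout $m=qs$, after padding as described above. The conditioners, with their near-linear block application, transform $C$ into $\widehat C$ and transform the output back at cost $\softO(ms)$. A product by $\widehat M$ still costs $\softO(L)$ per vector.

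The steps, in order, are as follows.

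\textbf{Step 1: the Hankel blocks.} Compute $Z_i=\widehat M^i u$ for $0\le i\le 2q-1$. Each step applies $\widehat M$ to an $m\times s$ block, so the whole sequence costs $\softO(qLs)=\softO(mL)$. Then form $u^\trans Z_i$ for each $i$. Because of the sparse structure of $u$, this costs $O(ms)$ per block and $O(m^2)$ in total. These products are the $2q$ distinct $s\times s$ blocks $u^\trans\widehat M^{i+j+1}u$ of $H$.

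\textbf{Step 2: the projection $K^{(\ell)}\widehat C$.} We need the blocks $u^\trans\widehat M^i\widehat C$ for $i<q$. We obtain them by iterating $\widehat M$ on $\widehat C$ and projecting, at cost $\softO(mL+m^2)$.

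\textbf{Step 3: the Hankel solve.} Solve $HW=K^{(\ell)}\widehat C$, where $H$ is a $q\times q$ block Hankel matrix with $s\times s$ blocks. We use a fast block-Hankel solver: a matrix-Padé/$\sigma$-basis computation on the matrix power series $\sum_i H_i x^i$. This costs $\softO(s^\omega q)$ to obtain a compact (Gohberg--Semencul-type) representation of $H^{-1}$. Applying that representation to the $m\times s$ right-hand side is a product of polynomial matrices of dimension $s$ and degree $q$. It costs $\softO(qs^\omega)=\softO(ms^{\omega-1})$.

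\textbf{Step 4: the output.} Compute $K^{(r)}W=\sum_{i<q}\widehat M^i u W_i$ by a Horner scheme $X\leftarrow \widehat M X+uW_i$. This is $q$ block products plus sparse additions, at cost $\softO(mL+ms)$.

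\textbf{Las Vegas certification.} The factorization holds only with high probability. We therefore verify the candidate $X$ by checking $MX=C$, at cost $\softO(Ls)$, and restart on failure. Theorem~\ref{thm:eggsv-projections} makes the success probability bounded below by a constant, so the expected number of trials is $O(1)$. Summing the four steps gives $\softO(mL+m^2+ms^{\omega-1})$.

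\textbf{Main obstacle.} The delicate step is Step 3: obtaining a block-Hankel inverse application in $\softO(qs^\omega)$ rather than $O(m^\omega)$. There are two difficulties. First, the solver needs genericity conditions, namely nonsingularity of $H$ and the corresponding leading block structure. Second, the cost must be $\softO(qs^\omega)$ even over small-degree extensions. For both we would invoke the EGGSV analysis, \cite[Theorem~3.1, Theorem~4.2]{EbeGie07}, together with the fast $\sigma$-basis algorithms used there. The term $m^2$ absorbs the projections and the storage of the $2q$ blocks of size $m\times s$.
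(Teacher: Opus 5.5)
Your proposal is correct and follows essentially the same route as the paper's proof: Krylov/Hankel data through the sparse projection $u$, projection of the right-hand side by $K^{(\ell)}$, an $\softO(qs^\omega)=\softO(ms^{\omega-1})$ structured block-Hankel solve, evaluation of $K^{(r)}$, and verification of $MX=C$ with restart. One harmless bookkeeping slip: the $q$ additions of $uW_i$ in your Horner step cost $O(qms)=O(m^2)$ in total rather than $O(ms)$, and this is already covered by the $m^2$ term in the bound.
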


\begin{proof}
  This is \cite[Corollary~4.4]{EbeGie07}, parameterized by the block
  size $s$ and applied to one $m\times s$ right-hand-side block.  Let
  $m=qs$.  The EGGSV construction uses the sparse projection
  $u=[I_s,\ldots,I_s]^\trans$, forms the block-Hankel Krylov data
  $u^\trans M^iu$, applies $K^{(\ell)}$ to the right-hand-side block
  $C$, solves the resulting structured block-Hankel system, and
  evaluates $K^{(r)}$ on the result.

  The cost is assessed as follows.  The $O(q)$ Krylov block products
  needed for the Hankel data, for $K^{(\ell)}C$, and for the final
  $K^{(r)}$ evaluation each cost $\softO(Ls)$ by $s$ columnwise vector
  products.  These contributions are $\softO(mL)$.  The sparse
  projections by $u^\trans$ over the generated sequences contribute
  $\softO(m^2)$ field operations.  The structured $q\times q$
  block-Hankel solve has $s\times s$ blocks and costs
  $\softO(qs^\omega)=\softO(ms^{\omega-1})$.  Fast conditioning and
  unconditioning are absorbed in these terms.

  The returned candidate is verified by checking $MX=C$.  Since $M$ is
  nonsingular, bad internal choices are detected either by a singular
  Krylov/Hankel construction or by this final verification, and
  independent resampling gives the stated expected cost. \qed
\end{proof}

The full inverse algorithm of \cite{EbeGie07} is stated with
access to both $M$ and $M^\trans$.  In the present paper we only use
the inverse-times-a-block variant.  For this task, the product
$K^{(\ell)}C$ is obtained from the Krylov sequence
\tightdisplay{
  C,\; MC,\; M^2C,\;\ldots
}
by applying the sparse projection $u^\trans$ to each block.  Thus the
solve routines used here require only products by $M$, not by
$M^\trans$.

We can also adapt the EGGSV framework for testing nonsingularity.

\begin{lemma}
  \label{lem:nonsingularity-test}
  Let $N\in \F^{r\times r}$, let $1\leq h\le r$, and suppose a vector
  product by $N$ costs $\softO(L_N)$ operations.  There is a verified
  sparse-block solve test which requires \break
  $\softO\left(rL_N+r^2+r h^{\omega-1}\right)$ operations in $\F$ per
  trial and satisfies the following:
  \begin{enumerate}[label=(\roman*),leftmargin=2.5em]
  \item If $N$ is nonsingular, a trial reports ``nonsingular'' with
    probability at least $1-\eta$, where $\eta$ is the chosen EGGSV
    conditioning failure probability.  Repeating independent trials
    and accepting if any trial succeeds makes this false-negative
    probability arbitrarily small.
  \item If $N$ is singular with nullity $\delta\ge1$, a trial reports
    ``nonsingular'' with probability at most
    $|\F|^{-\delta h}\le |\F|^{-h}$.  If $R$ independent trials are
    used and the algorithm accepts when any trial succeeds, the
    singular false-positive probability is at most $R |\F|^{-h}$.  The
    extension field and the number of repetitions must be chosen so
    that $R|\F|^{-h}$ is sufficiently small.
  \end{enumerate}
\end{lemma}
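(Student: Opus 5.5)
The test itself is simple: draw a uniformly random block $W\in\F^{r\times h}$, run the EGGSV sparse-block solve of Theorem~\ref{thm:param-solve} on $N$ (with block size $h$, $m=r$, $L=L_N$) with right-hand side $W$, and then verify the candidate $X$ by computing $NX$ and comparing it with $W$. The trial reports ``nonsingular'' exactly when the Krylov/Hankel construction goes through and $NX=W$ holds. The cost bound follows directly from Theorem~\ref{thm:param-solve}. The Krylov block products cost $\softO(rL_N)$, the sparse projections by $u^\trans$ cost $\softO(r^2)$, and the structured block-Hankel solve costs $\softO(rh^{\omega-1})$. The final check $NX$ is $h$ vector products, at cost $\softO(hL_N)\subseteq\softO(rL_N)$. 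We run only one pass per trial rather than the Las Vegas resampling loop, so the bound holds per trial.

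For (i), assume $N$ is nonsingular. Theorem~\ref{thm:eggsv-projections} then gives, with probability at least $1-\eta$ over the conditioner choices, nonsingular Krylov matrices and a nonsingular Hankel $H$, together with the factorization $\widehat N^{-1}=K^{(r)}H^{-1}K^{(\ell)}$. In that event the algorithm returns exactly $X=N^{-1}W$, which passes verification, so the trial reports ``nonsingular''. The outcomes of independent trials are independent, so after $R$ trials the probability that all of them fail is at most $\eta^R$.

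For (ii), the key observation is that reporting ``nonsingular'' requires that an $X$ with $NX=W$ was actually found and checked. That forces every column of $W$ to lie in $\im N$. The verification step holds deterministically, whatever the internal random choices were, so this conclusion does not depend on the conditioning succeeding. Since $N$ has nullity $\delta$, the column space $\im N$ is a subspace of codimension $\delta$. A uniform vector lies in it with probability $|\F|^{-\delta}$, and the $h$ columns of $W$ are independent, so
\[
\Pr[W\subseteq\im N]=|\F|^{-\delta h}\le|\F|^{-h}.
\]
We need $W$ to be independent of the conditioner randomness, and it is because we sample it separately. A union bound over $R$ trials then gives false-positive probability at most $R|\F|^{-h}$. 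We choose the extension degree $d=O(\log n+\log(1/\epsilon))$ so that this quantity is at most the target.

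The main subtlety is in (ii). We must make sure the singular case cannot produce a spurious ``success'', which is why the verification $NX=W$ has to be performed explicitly on the original $N$ and not on the conditioned $\widehat N$. If the solve is carried out on $\widehat N$, we undo the conditioners before checking. This is harmless because they are invertible and cost near-linear time to apply. A second point is that the right-hand side must be genuinely random and not adversarial, so that the $|\F|^{-\delta h}$ bound applies.
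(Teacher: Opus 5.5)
Your proposal is correct and follows essentially the same route as the paper: a single verified EGGSV solve attempt with a uniformly random right-hand side, with the cost read off from Theorem~\ref{thm:param-solve} and the false-positive bound coming from the fact that a passed check $NX=W$ forces every column of $W$ into $\im(N)$, a subspace of codimension $\delta$. Your remark that $W$ must be independent of the conditioner randomness is not actually needed, since the event ``reports nonsingular'' is always contained in the event that every column of $W$ lies in $\im(N)$, whatever the conditioners do.
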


\begin{proof}
  Choose a random block $C\in\F^{r\times h}$ and run one verified
  sparse-block solve attempt for $NX=C$ using
  Theorem~\ref{thm:param-solve} with block size $h$.  Report
  ``nonsingular'' exactly when the solve attempt returns a matrix $X$
  and the verification $NX=C$ succeeds.

  We use the same bounded EGGSV solve attempt as an algorithmic
  procedure on arbitrary input.  On nonsingular $N$,
  Theorem~\ref{thm:param-solve} provides its Las Vegas success
  behaviour.  On singular $N$, regardless of its internal behaviour,
  the verification $NX=C$ can succeed only if every column of $C$ lies
  in $\im(N)$, and the false-positive probability is at most
  $|\F|^{-\delta h}$.
  
  The cost is the cost of Theorem~\ref{thm:param-solve} with dimension
  $r$, block size $h$, and vector-product cost $L_N$, namely
  $\softO\left(rL_N+r^2+r h^{\omega-1}\right)$. \qed
\end{proof} 

It is straightforward to employ a Wiedemann-style rank routine, using
binary search, singularity testing, and fast solving to determine
rank.  Using \pcite{KalSau91} method may be faster, but some
infrastructure around using the block minimal polynomial would have to
be developed.

\begin{theorem}
  \label{thm:rank-binary-search}
  Let $M\in \F^{m\times m}$ be arbitrary, and suppose one vector
  product $v\mapsto Mv$ costs $\softO(L)$ operations.  There is a
  Monte Carlo algorithm that returns $\rank(M)$ with failure
  probability at most $\epsilon$ and cost
  \tightdisplay{
    \softO\left(mL+m^2+m s^{\omega-1}\right).
  }
  The algorithm uses Wiedemann's rank-by-singularity-reduction
  approach, the PRECONDSXS rank-profile preconditioner of
  \cite{CheEbe02}, and the nonsingularity test of
  Lemma~\ref{lem:nonsingularity-test} from EGGSV.  If the original
  field is too small, the computation is performed over an extension
  field of degree $O(\log m+\log(1/\epsilon))$.
\end{theorem}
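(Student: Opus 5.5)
The proof reduces rank to nonsingularity tests, following \cite{Wie86}, and charges each test to Lemma~\ref{lem:nonsingularity-test}. First extend the field if needed to degree $O(\log m+\log(1/\epsilon))$. This makes the PRECONDSXS conditioner of \cite{CheEbe02} succeed with high probability. It also makes the false-positive bound $R|\F|^{-h}$ of Lemma~\ref{lem:nonsingularity-test} small. By Lemma~\ref{lem:extend}-style reasoning for fields, rank is invariant under this extension.

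Apply the conditioner to obtain $\widetilde M = P M Q$ with $P,Q$ random structured matrices (Toeplitz, butterfly, or diagonal). With high probability, $\widetilde M$ has generic rank profile, so its leading $i\times i$ principal minor is nonzero exactly for $i\le \rank(M)$. Each product with $\widetilde M$ still costs $\softO(L)$ plus $\softO(m)$ for the conditioners. The leading $i\times i$ submatrix $N_i$ is accessed by zero-padding a vector, applying $\widetilde M$, and truncating, so $L_{N_i}=\softO(L)$.

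Binary search on $i\in\{0,\dots,m\}$ for the largest $i$ with $N_i$ nonsingular. This takes $O(\log m)$ tests, and each test calls Lemma~\ref{lem:nonsingularity-test} with $r=i\le m$ and block size $h=\min\{s,i\}$. The cost per test is then $\softO(mL+m^2+ms^{\omega-1})$. The total cost picks up only a $\log m$ factor, together with $R=O(\log(\log m/\epsilon))$ repetitions per test to drive the false-negative probability down. Both factors are absorbed in $\softO$.

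The main obstacle is the error accounting. The binary search is correct as long as every one of the $O(\log m)$ tests answers correctly and the conditioner produced generic rank profile. Monotonicity of the predicate ``$N_i$ nonsingular'' holds only under generic rank profile, so conditioner failure must be charged separately at probability $\le\epsilon/3$ via the field size. Each nonsingular test errs with probability at most $\eta^R$. Each singular test errs with probability at most $R|\F|^{-h}\le R|\F|^{-1}$. A union bound over $O(\log m)$ tests with $|\F|\ge (\log m)R/\epsilon$ gives total failure at most $\epsilon$. This is where the extension degree $O(\log m+\log(1/\epsilon))$ is needed, and why the result is Monte Carlo rather than Las Vegas.
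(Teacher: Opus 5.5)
Your proposal is correct and follows essentially the paper's proof: Wiedemann's binary search over leading minors of a conditioned matrix, each minor tested by Lemma~\ref{lem:nonsingularity-test} with $h=\min\{s,i\}$ and $L_{N_i}=\softO(L)$ via embedding and truncation, followed by repetition and a union bound over $O(\log m)$ tests. The one difference is where the conditioning sits. You fix a single conditioner with full generic rank profile, which is stronger than the per-size PRECONDSXS property, so its failure has to be charged once through the field size. The paper instead draws fresh PRECONDSXS conditioners $P_r,Q_r$ for each candidate $r$. That needs only the one leading $r\times r$ minor to be nonzero, makes the predicate $r\le\rank(M)$ monotone by definition, and lets conditioner failure count as an ordinary false negative that repetition reduces.
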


\begin{proof}
  We recall \pcite{Wie86} rank-by-search reduction.  For a candidate
  integer $r$, the predicate is $P(r): r\le \rank(M)$. The PRECONDSXS
  conditioning of \cite{CheEbe02}, together with their reductions and
  fast conditioners, supplies fast invertible $P_r,Q_r$ such that the
  leading $r\times r$ block $N_r=(P_rMQ_r)_{1:r,1:r}$ is nonsingular
  with high probability when $r\le\rank(M)$.  If $r>\rank(M)$, every
  $r\times r$ submatrix is singular.

  Test $N_r$ with Lemma~\ref{lem:nonsingularity-test}, using
  $h=\min(r,s)$.  A vector product by $N_r$ is obtained by embedding
  into $m$ coordinates, applying the fast conditioners and $M$, and
  restricting; this costs $\softO(L)$, since a product by $M$ already
  has output size $m$.  Thus each predicate test costs at most
  $\softO\left(mL+m^2+m s^{\omega-1}\right)$, including the case
  $r<s$.  Binary search uses $O(\log m)$ predicate calls.  Repetition
  reduces false negatives when the predicate is true; when it is
  false, the only accepting events are the random-right-hand-side
  false positives of Lemma~\ref{lem:nonsingularity-test}, bounded by
  $R|\F|^{-h}$ for $R$ trials.  Choosing the extension field and
  repetition count so that each predicate has error at most
  $\epsilon/O(\log m)$ and applying a union bound gives total failure
  probability at most $\epsilon$. \qed
\end{proof}

We will employ the adapted methods of \cite{EbeGie06,EbeGie07},
\cite{Wie86}, \cite{CheEbe02}, which we summarize as follows.

\begin{corollary}
  \label{cor:field-engine}
  Let $M\in \F^{m\times m}$ be a finite-field matrix for which one
  vector product costs $\softO(L)$ operations.  The following
  operations are available with the displayed costs:
  \begin{enumerate}[label=(\alph*),leftmargin=2.5em]
  \item If $M$ is nonsingular, compute $M^{-1}C$ for
    $C\in \F^{m\times s}$ with expected
    \tightdisplay{
      \softO\left(mL+m^2+m s^{\omega-1}\right)
      \hspace*{10pt}\mbox{operations in $\F$ (Las Vegas).}
    }
  \item Compute $\rank(M)$ in an expected
    \tightdisplay{
      \softO\left(mL+m^2+m s^{\omega-1}\right)
      \hspace*{10pt}\mbox{operations in $\F$ (Monte Carlo).}
    }
    by binary search on nonsingularity tests, with failure probability
    bounded by the chosen parameter.
  \end{enumerate}
\end{corollary}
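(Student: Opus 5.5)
The corollary packages earlier results, so the plan is to verify that each part follows from one already-stated theorem, with the cost parameters matched and the probabilistic guarantees carried over. For part (a), I will invoke Theorem~\ref{thm:param-solve} with the same $m$, $L$ and right-hand-side block $C\in\F^{m\times s}$. It gives a Las Vegas algorithm of expected cost $\softO(mL+m^2+ms^{\omega-1})$, which is exactly the displayed bound. The only point to check is the padding convention of Section~\ref{sec:finite-field-engine} when $s\nmid m$. Taking the direct sum of $M$ with an identity block of size less than $s$ keeps the matrix nonsingular. It changes $m$ by less than $s\le m$, so every cost term grows by at most a constant factor. Correctness is unaffected: the solution is read off from the first $m$ coordinates, and the verification $MX=C$ is still performed.

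For part (b), I will invoke Theorem~\ref{thm:rank-binary-search}. That theorem already gives a Monte Carlo rank algorithm with failure probability at most $\epsilon$ and cost $\softO(mL+m^2+ms^{\omega-1})$. It is built from $O(\log m)$ predicate tests on the leading $r\times r$ blocks of the PRECONDSXS-conditioned matrix $P_rMQ_r$, using Lemma~\ref{lem:nonsingularity-test} with $h=\min(r,s)$. Two remarks are needed to match the corollary's wording.

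First, the corollary states an \emph{expected} cost. Each nonsingularity trial is a bounded EGGSV solve attempt, and the number of repetitions $R$ per predicate is chosen in advance. It is $O(\log(1/\epsilon)+\log\log m)$, which suffices to drive both the false-negative probability (at most $\eta^R$) and the false-positive probability (at most $R|\F|^{-h}$, with $|\F|$ enlarged by the extension of degree $O(\log m+\log(1/\epsilon))$) below $\epsilon/O(\log m)$. The total cost is therefore at most a polylogarithmic multiple of the per-trial cost, so the worst-case bound, and hence the expected bound, is absorbed into the $\softO$.

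Second, the extension-field overhead is itself only polylogarithmic. The model of Section~2 accounts for this: a product by $M$ over the extension decomposes into $d$ products over the base field, plus $\softO(md)$ conversion work.

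I expect the main subtlety to be the failure-probability accounting in (b), which I will handle with the union bound of Theorem~\ref{thm:rank-binary-search}. A false positive on a singular $N_r$ can only make the binary search overestimate the rank. A conditioning failure on a nonsingular $N_r$ can only make it underestimate. Bounding each predicate's error by $\epsilon/O(\log m)$ therefore bounds the total error by $\epsilon$, and the cost bound holds regardless of which of these outcomes occurs.
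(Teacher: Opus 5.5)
Your proposal is correct and matches the paper. The paper gives no separate argument for this corollary; it presents it as a summary of Theorem~\ref{thm:param-solve} for part~(a) and Theorem~\ref{thm:rank-binary-search} for part~(b), and your additional bookkeeping (identity-block padding, a fixed repetition count that turns the expected bound into a worst-case one, and the per-predicate union bound) is consistent with the conventions of Section~\ref{sec:finite-field-engine}.
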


Later we will apply Corollary~\ref{cor:field-engine} to specialize to
leading submatrices of the preconditioned sparse matrices.  In
particular, suppose $B_0=B\bmod p\in \F^{a\times a}$, a leading block
of the preconditioned matrix $A'\bmod p$.  A vector product by $B_0$
is implemented by embedding the vector into the first $a$ coordinates
of an $n$-vector, applying $A'\bmod p$, and restricting to the first
$a$ coordinates.  Hence a vector product by $B_0$ costs $\softO(n)$,
and applying $B_0$ to an $a\times s$ block costs $\softO(ns)$ by
applying this vector operation to the columns.  If $B_0$ is
nonsingular, computing $B_0^{-1}C$ for $C\in\F^{a\times s}$ via
Theorem~\ref{thm:param-solve} requires
$\softO\left(an+a^2+a s^{\omega-1}\right)$ operations in $\F$.

The same rank-by-binary-search proof is later applied to the divided
Schur complement operator $T$.  We do not introduce a separate black
box for $T$: Section~\ref{sec:T-oracle} constructs an explicit
$s$-column oracle for $T$ from the original $A$-oracle and from the
verified sparse-block solves with $B_0$.  The rank proof uses $O(k/s)$
calls to this $s$-column black box, together with
$\softO(k^2+k s^{\omega-1})$ additional field operations for sparse
projections and block-Hankel algebra.  This is the counting used in
the cost analysis.

We note that the sparse-block projection, PRECONDSXS, and random
right-hand-side tests are probabilistic and require a sufficiently
large field for precise failure bounds.  If $p$ is too small, choose
$\F=\FF_{p^d}$, for $d=O(\log n+\log(1/\epsilon))$, so that $|\F|$ is
larger than the polynomial-size sets required by the random
conditioners and so that the singular false-positive bound
$R|\F|^{-h}$ is negligible for the planned number $R$ of repetitions.
The algorithm is then run over $\Rhat=\GR(p^2,d)$.  By
Lemma~\ref{lem:extend}, this does not change the Smith multiplicities.
The extension degree contributes only polylogarithmic factors in the
usual algebraic-operation model when $p$ is fixed or word-sized.

\section{Preconditioners for fast SNF computation}

Preconditioners have two distinct roles in our algorithms, which we
outline below.

\subsection{Global Smith/rank-profile preconditioning}

We need fast $P,Q\in\GL_n(\Rhat)$ such that $A'=PAQ$ has a leading
$a\times a$ block $B$ with $B\bmod p$ nonsingular, where
$a=\rank(A\bmod p)$.  The required properties are:

\begin{enumerate}[label=(G\arabic*),leftmargin=3em]
\item $P,Q$ are invertible over $\Rhat$.  This ensures that Smith
  multiplicities are preserved.
\item Their reductions $P_0,Q_0\in\GL_n(\F)$ give generic rank profile
  to $P_0(A\bmod p)Q_0$, or at least make its leading $a\times a$
  submatrix nonsingular.
\item $P,Q$ apply to $n\times s$ blocks in $\softO(ns)$ operations.
\end{enumerate}

The following lemma summarizes our global preconditioning requirement.

\begin{lemma}[Fast generic rank-profile preconditioning]
  \label{lem:global-preconditioner}
  Let $M\in \F^{n\times n}$ have rank $a$.  Over a sufficiently large
  finite field, there are randomized fast invertible conditioners
  $P_0,Q_0\in\GL_n(\F)$ such that $P_0MQ_0$ has a nonsingular leading
  $a\times a$ block with probability at least $1-\epsilon$.  The
  conditioners apply to $n\times s$ blocks in $\softO(ns)$ operations.
  If the field is too small, the same statement holds after extension
  degree $O(\log n+\log(1/\epsilon))$.
\end{lemma}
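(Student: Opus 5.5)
The plan is to invoke the standard Toeplitz/butterfly preconditioning results of \cite{KalSau91} and \cite{CheEbe02} (the same PRECONDSXS family already used in Theorem~\ref{thm:rank-binary-search}) and to check the three properties needed: invertibility, generic rank profile, and fast application to blocks. I would take $Q_0$ to be a product $T_1 D_1$ and $P_0$ a product $D_2 T_2$. Here $T_1,T_2$ are unit upper/lower triangular Toeplitz matrices (or, alternatively, butterfly networks in the sense of \cite{CheEbe02}) with random entries from a finite subset $S\subseteq\F$, and $D_1,D_2$ are random diagonal matrices with nonzero entries. Each factor is invertible by construction. A unit-triangular Toeplitz matrix is invertible, and a butterfly network is invertible whenever its switch parameters avoid a known finite set, so $P_0,Q_0\in\GL_n(\F)$ regardless of the random choices. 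The same conditioners over $\Rhat$ are obtained by lifting the entries, which gives (G1).

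For the rank-profile property, I would argue as follows. The determinant of the leading $a\times a$ block of $P_0MQ_0$ is a polynomial in the random parameters of total degree $O(n\log n)$ for butterflies, or $O(n^2)$ for Toeplitz (degree $\le 2a$ per factor). This polynomial is not identically zero because $\rank M=a$. For a nonzero-ness witness, I would cite \cite[Theorem on generic rank profile]{KalSau91} or \cite{CheEbe02}: there exist parameter values for which the conditioned matrix has generic rank profile, since permutation-like choices in a butterfly move any $a$ independent rows and columns into leading position. By Schwartz--Zippel, the failure probability is at most $\deg/|S|$. Taking $|S|\ge \deg/\epsilon$, which is polynomial in $n$ and $1/\epsilon$, requires $|\F|$ of that size. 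If $|\F|$ is too small, I would pass to $\FF_{p^d}$ with $d=O(\log n+\log(1/\epsilon))$. Rank is invariant under field extension, and Lemma~\ref{lem:extend} handles the Smith side.

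Fast application (G3) comes from two facts. A butterfly network has $O(\log n)$ layers of $O(n)$ $2\times 2$ switches, so applying it costs $O(n\log n)$ per vector. Multiplication by a Toeplitz matrix is a polynomial product costing $\softO(n)$ per vector. Applying either columnwise to an $n\times s$ block therefore costs $\softO(ns)$.

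The main obstacle is the non-vanishing witness in the Schwartz--Zippel step, which is the genuinely nontrivial content; everything else is routine. I would not reprove it but rely on \cite{CheEbe02}, whose generic-rank-profile theorem for butterfly and Toeplitz preconditioners applies verbatim to any field of adequate size.
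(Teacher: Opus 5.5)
Your overall route is the paper's. Its proof is likewise a citation of the PRECONDSXS reduction and butterfly linear-independence conditioners of \cite{CheEbe02} and the Toeplitz conditioners of \cite{KalSau91}, closed by a Schwartz--Zippel bound and the $O(\log n+\log(1/\epsilon))$ extension-degree count. Your field-size and $\softO(ns)$ cost bookkeeping matches it.

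The one concrete problem is the explicit Toeplitz construction you commit to. Read respectively, ``$T_1,T_2$ are unit upper/lower triangular'' puts a lower triangular factor on the left and an upper triangular factor on the right of $P_0MQ_0=D_2T_2MT_1D_1$. Write $X_{11}$ for a leading $a\times a$ block. The first $a$ rows of $T_2M$ involve only the first $a$ rows of $M$, and the first $a$ columns of $(T_2M)T_1$ involve only the first $a$ columns of $T_2M$. So the leading block is $(D_2)_{11}(T_2)_{11}M_{11}(T_1)_{11}(D_1)_{11}$. Its determinant is $\det M_{1:a,1:a}$ times a nonzero product of diagonal entries, and it does not depend on the random Toeplitz entries at all. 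For $M=\diag(0,1)$ this conditioner fails with probability $1$. This is exactly where your sentence ``not identically zero because $\rank M=a$'' breaks: rank alone gives nothing without the right shape of conditioner, and in this orientation no witness exists. If the slash was not meant respectively, the orientation still has to be pinned down, since only one of the two assignments works.

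The fix is the orientation actually used in \cite{KalSau91}: unit upper triangular Toeplitz on the left and unit lower triangular Toeplitz on the right. Then the leading $a$ rows and columns become random combinations of all rows and columns. The leading $a\times a$ minor is a nonzero polynomial of total degree at most $2a$, so $|S|\ge 2a/\epsilon$ already suffices. (The $r(r+1)$ bound in \cite{KalSau91} is for all leading minors simultaneously.) The diagonal factors $D_1,D_2$ are harmless but unnecessary here, since they only scale leading minors by nonzero factors. With this correction your proposal is the paper's argument.
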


\begin{proof}
  This is the standard generic-rank-profile/PRECONDSXS conditioning
  theorem used in black-box rank algorithms.  \cite{CheEbe02}
  explicitly formulate PRECONDSXS as the condition that, for a
  requested size $s\le\rank(M)$, the leading $s\times s$ principal
  minor of the preconditioned matrix is nonzero; they also state its
  use in rank computation by binary search
  \cite[Section~3.1]{CheEbe02}.  Their Theorem~3.1 reduces PRECONDSXS
  and generic rank profile to linear-independence preconditioners;
  Theorem~6.2 gives the butterfly routing network, and Theorem~6.3
  gives the associated randomized linear-independence probability
  bound, yielding fast linear-independence conditioners with
  near-linear application cost \cite[Theorems~3.1, 6.2,
  and~6.3]{CheEbe02}.  \pcite{KalSau91} triangular Toeplitz
  conditioners give another standard fast generic-rank-profile family.
  EGGSV use the same kind of fast rank-profile conditioning, together
  with diagonal scalings and butterfly-type conditioners, to make the
  leading block minors and sparse-block Krylov determinants required
  by their efficient projection theorem nonzero over sufficiently
  large fields \cite[Corollary~2.3 and Section~4]{EbeGie07}.  The bad
  choices are zeros of provably nonzero polynomials in the random
  parameters, so a Schwartz--Zippel bound, or the extension-field
  choice used in EGGSV, gives failure probability at most
  $\epsilon$. \qed
\end{proof}

In the Smith algorithm, choose such sparse invertible preconditioners
$P_0,Q_0$ for $M=A\bmod p$ and lift them coefficientwise to
$P,Q\in\GL_n(\Rhat)$.  A lift is invertible over $\Rhat$ because its
determinant is nonzero modulo $p$, hence a unit in the local ring.
Therefore $A'=PAQ$ is Smith-equivalent to $A$, and the leading block
$B\bmod p$ is nonsingular whenever the preconditioning event succeeds.

The event that the leading $a\times a$ block is nonsingular is checked
probabilistically by computing $\rank_\F(B\bmod p)$ using
Corollary~\ref{cor:field-engine}.  If the reported rank is not $a$,
resample $P,Q$.  If a Monte Carlo rank check falsely reports full
rank, this is included in the global Monte Carlo failure probability.
The later solve verifications catch many such mistakes, but the proof
of correctness does not rely on them as deterministic proofs of
nonsingularity.

\subsection{Internal EGGSV sparse-block conditioning}

Inside the finite-field solve algorithms and the solve-based rank
tests we use additional fast conditioners and sparse-block
projections.  These conditioners are not part of the Smith
equivalence, but are internal to the finite-field routines.  These
conditioners must satisfy:

\begin{enumerate}[topsep=0pt,label=(W\arabic*),leftmargin=3em]
\item the relevant block Krylov matrices have full rank or the
  required generic degree profile with high probability;
\item applying the conditioners and sparse projections to $m\times t$
  blocks costs $\softO(mt)$ per Krylov step;
\item the total projection overhead over a sequence of length
  $\softO(m/t)$ is $\softO(m^2)$.
\end{enumerate}

The last item is the key difference from using dense block
projections inside the solve/rank tests.

\section{Carry-safe application of the divided Schur complement}
\label{sec:T-oracle}

This section presents the main new algorithmic ideas.  All operations
are done over $\Rhat$ until an element is known to be divisible by
$p$.

Let
\[
  A'=
  \begin{bmatrix}
    B&C\\
    D&E
  \end{bmatrix},
  \qquad B\in \Rhat^{a\times a},
  \qquad B_0=B\bmod p\in \F^{a\times a}.
\]
Assume $B_0$ is nonsingular.  Let
\tightdisplay{
  J=E-DB^{-1}C=pT,
  \qquad T\in \F^{k\times k},
  \qquad k=n-a.
}

\subsection{Hensel solve modulo $p^2$}

Given $U\in \Rhat^{a\times t}$, solve
\tightdisplay{
  BY=U\bmod p^2
}
as follows.

\begin{enumerate}[label=H\arabic*.,itemsep=2pt,leftmargin=3em]
\item Reduce modulo $p$:
\tightdisplay{
  U_0=U\bmod p,
  \qquad B_0=B\bmod p.
}

\item Solve over $\F$:
  \[
  B_0Y_0=U_0.
\]

\item Lift $Y_0$ coefficientwise to $\widetilde Y_0\in \Rhat^{a\times t}$.

\item Compute the residual in $\Rhat$:
 \tightdisplay{
  W=U-B\widetilde Y_0\bmod p^2.
 }
Because $B_0Y_0=U_0$, the matrix $W$ lies in $p\Rhat^{a\times t}$.

\item Divide by $p$ only now:
  \tightdisplay{
  W_1=W/p\bmod p\in \F^{a\times t}.
  }

\item Solve over $\F$:
  \tightdisplay{
  B_0Y_1=W_1.
  }

\item Lift $Y_1$ to $\widetilde Y_1\in \Rhat^{a\times t}$ and return
  \tightdisplay{
    Y=\widetilde Y_0+p\widetilde Y_1.
  }
\end{enumerate}

\begin{lemma}[Correctness of the Hensel solve]
  \label{lem:hensel}
  The matrix $Y$ returned above satisfies $BY=U\bmod p^2$.
\end{lemma}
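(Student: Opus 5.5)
The plan is a direct expansion of $BY$ in $\Rhat$, using $p^2=0$ and the identification of $p\Rhat$ with $\F$ described in Section~2.

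First I would justify step H4: why $W\in p\Rhat^{a\times t}$. Reducing modulo $p$ gives
\[
  W\bmod p = U_0 - B_0Y_0,
\]
since reduction is a ring homomorphism and $\widetilde Y_0\bmod p=Y_0$. This vanishes by step H2, so every entry of $W$ lies in $p\Rhat$. Hence $W_1=W/p\bmod p$ is well defined, and it is the unique matrix over $\F$ with $W=p\widetilde W_1$ for any lift $\widetilde W_1$ of $W_1$. This uses the isomorphism $\Rhat/p\Rhat\cong p\Rhat$, $\bar x\mapsto px$, stated in Section~2.

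Next I would expand the product:
\[
  BY = B\widetilde Y_0 + pB\widetilde Y_1 = U - W + pB\widetilde Y_1 .
\]
It then remains to show that $pB\widetilde Y_1 = W$ in $\Rhat$. The key observation is that $p\cdot x$ depends only on $x\bmod p$, because $p^2=0$. Therefore $pB\widetilde Y_1 = p\,\widetilde{(B_0Y_1)}$ for any lift of $B_0Y_1$. By step H6 we have $B_0Y_1=W_1$, so $pB\widetilde Y_1=p\widetilde W_1=W$. Substituting this back gives $BY=U$ in $\Rhat$, which is the claim modulo $p^2$.

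The only delicate point is the bookkeeping around division by $p$. One must make sure the argument never divides a quantity that is not yet known to lie in $p\Rhat$. One must also check that the choice of lifts in H3 and H7 does not matter. Both issues are settled by the fact that multiplication by $p$ kills $p\Rhat$; this is the same device used in the proof of Lemma~\ref{lem:divided-schur-rank}. The solves in H2 and H6 exist and are unique because $B_0$ is assumed nonsingular.
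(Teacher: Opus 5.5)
Your proof is correct and takes essentially the same route as the paper's. Both identify $U-B\widetilde Y_0$ with $pW_1$, use $p^2=0$ to replace $pB\widetilde Y_1$ by $pB_0Y_1=pW_1$, and add the two pieces; your explicit check that $W\bmod p=U_0-B_0Y_0=0$ (justifying step H4) and your remark on lift-independence are details the paper leaves implicit.
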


\begin{proof}
  By construction,
  \[
    U-B\widetilde Y_0=pW_1\bmod p^2.
  \]
  Also $B_0Y_1=W_1$ over $\F$, so
  \[
    pB\widetilde Y_1=pB_0Y_1=pW_1\bmod p^2.
  \]
  Therefore
  \[
    B(\widetilde Y_0+p\widetilde Y_1)
    =B\widetilde Y_0+pB\widetilde Y_1
    =B\widetilde Y_0+pW_1
    =U\bmod p^2. \tag*{\qed}
  \]
\end{proof}

\subsection{The carry trap}

Suppose we write digit expansions
\tightdisplay{
  B=B_0+pB_1,
  \qquad U=U_0+pU_1.
}
It is tempting, but incorrect, to set
\tightdisplay{
  W_1=U_1-B_1Y_0.
}
The missing term is the carry from the product $B_0Y_0$ in
representatives.  If
\tightdisplay{
  B_0Y_0=U_0+pK
}
in canonical representatives, then the correct residual is
\tightdisplay{
  W_1=U_1-B_1Y_0-K\bmod p.
}
Computing
\tightdisplay{
  W_1=(U-B\widetilde Y_0)/p\bmod p
}
after the multiplication in $\Rhat$ automatically includes the
carry~$K$.

\subsection{Oracle for $X\mapsto TX$}

Let $X\in \F^{k\times t}$.  The block oracle for $TX$ is:

\begin{enumerate}[label=T\arabic*.,leftmargin=3em]
\item Lift $X$ to $\widetilde X\in \Rhat^{k\times t}$.

\item Compute over $\Rhat$:
\[
  A'\begin{bmatrix}0\\\widetilde X\end{bmatrix} =
  \begin{bmatrix}U\\V\end{bmatrix}
  =
  \begin{bmatrix}C\widetilde X\\E\widetilde X\end{bmatrix}
  \bmod p^2.
\]

\item Use the Hensel solve above to compute $Y$ satisfying
\[
  BY=U\bmod p^2.
\]
This step involves two solves with $B_0$ over $\F$ and one carry-safe
residual computation in $\Rhat$.

\item Compute over $\Rhat$:
\[
  Z=A'
  \begin{bmatrix}
    -Y\\\widetilde X
  \end{bmatrix} =
  \begin{bmatrix}
    0\\J\widetilde X
  \end{bmatrix}
  \bmod p^2.
\]

\item The lower block of $Z$ is divisible by $p$.  Return
\[
  TX=(\text{lower block of }Z)/p\bmod p.
\]
\end{enumerate}

\begin{lemma}[Correctness of the $T$-oracle]\label{lem:Toracle}
  The oracle returns $TX$, where $T=J/p\bmod p$.
\end{lemma}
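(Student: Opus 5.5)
The argument is a direct computation of $Z$ over $\Rhat$ using the block structure of $A'$, with Lemma~\ref{lem:hensel} supplying the key identity. First, Step T2 is just block multiplication. Since the upper block of $\bigl[\begin{smallmatrix}0\\ \widetilde X\end{smallmatrix}\bigr]$ is zero, we get $U=C\widetilde X$ and $V=E\widetilde X$ modulo $p^2$. Next, Lemma~\ref{lem:hensel} gives $BY=U=C\widetilde X \bmod p^2$. Because $B$ is invertible over $\Rhat$ (Lemma~\ref{lem:schur}), this forces $Y=B^{-1}C\widetilde X$ exactly in $\Rhat^{a\times t}$. So the Hensel output is not merely some solution but the unique one.

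Substituting into Step T4, the upper block of $Z$ is
\[
-BY+C\widetilde X = -C\widetilde X + C\widetilde X = 0,
\]
and the lower block is
\[
-DY+E\widetilde X=(E-DB^{-1}C)\widetilde X = J\widetilde X .
\]
This verifies the displayed identity in T4.

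It remains to justify the division in Step T5. By Lemma~\ref{lem:schur}, $J=pT$, so
\[
J\widetilde X = p\,\widetilde T\,\widetilde X \in p\Rhat^{k\times t}
\]
for any lift $\widetilde T$ of $T$. The quotient by $p$ is well defined modulo $p$ via the isomorphism $\Rhat/p\Rhat\cong p\Rhat$ noted in Section~2.

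The one point needing care is that the result does not depend on the chosen lift $\widetilde X$, nor on the lifts $\widetilde Y_0,\widetilde Y_1$ chosen inside the Hensel solve. Independence from the Hensel lifts is automatic, since $Y$ is determined uniquely. For independence from $\widetilde X$, replace $\widetilde X$ by $\widetilde X+pW$. This changes $J\widetilde X$ by $pJW=p^2TW=0$, so $Z$ is unchanged. Finally,
\[
p\widetilde T\widetilde X \equiv p\,\widetilde{(TX)} \pmod{p^2},
\]
because $\widetilde T\widetilde X$ reduces to $TX$ modulo $p$. Dividing by $p$ and reducing modulo $p$ therefore returns exactly $TX$.

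I expect no real obstacle. The only subtle step is to insist that all arithmetic in T2–T4 is done in $\Rhat$ before dividing. This is exactly the carry issue of the previous subsection, and it is handled because Lemma~\ref{lem:hensel} is an identity in $\Rhat$ rather than a digitwise one.
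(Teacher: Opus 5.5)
Your proposal is correct and follows essentially the same route as the paper: block-multiply $A'$ against $\bigl[\begin{smallmatrix}-Y\\ \widetilde X\end{smallmatrix}\bigr]$ using $BY=C\widetilde X$ (so $Y=B^{-1}C\widetilde X$) to get zero upper block and lower block $J\widetilde X$, divide by $p$ using $J=pT$, and use the same $\widetilde X\mapsto\widetilde X+pW$ argument for lift-independence. Your extra remarks, that $Y$ is uniquely determined (so the Hensel lifts do not matter) and that $p\widetilde T\widetilde X\equiv p\,\widetilde{(TX)} \pmod{p^2}$, only make explicit details that the paper's proof leaves implicit.
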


\begin{proof}
  Step T2 gives $U=C\widetilde X$.  Step T3 computes
  $Y=B^{-1}U\bmod p^2$.  Therefore
  \[
    A'\begin{bmatrix}
      -Y\\
      \widetilde X
    \end{bmatrix} =
    \begin{bmatrix}
      -BY+C\widetilde X\\
      -DY+E\widetilde X
    \end{bmatrix}
    =
    \begin{bmatrix}
      0\\
      (E-DB^{-1}C)\widetilde X
    \end{bmatrix}
    =
    \begin{bmatrix}
      0\\
      J\widetilde X
    \end{bmatrix} \bmod p^2.
  \]
  Since $J=pT$, the lower block is divisible by $p$, and division by
  $p$ gives $TX$ over $\F$. The result is independent of the chosen
  lift of $X$: replacing $\widetilde X$ by $\widetilde X+pW$ changes
  $J\widetilde X$ by $JpW=p^2TW=0$ modulo $p^2$.  \qed
\end{proof}

\section{The main algorithm for SNF modulo $p^2$}

We now present our algorithm for sparse Smith Normal Form modulo
$p^2$.

\medskip\noindent
Input: $A\in \R^{n\times n}$ with fast sparse matrix-vector products.\\
Output: $(r_0,r_1,r_2)$ such that $\SNF(A)=\diag(1^{r_0},p^{r_1},0^{r_2})$.

\begin{enumerate}[label=A\arabic*.,itemsep=3pt,leftmargin=3em]
\item Choose an extension $\Rhat=\GR(p^2,d)$ if needed so that the
  residue field $\F$ is large enough for the sparse-block
  randomization.

\item Compute $a=\rank_\F(A\bmod p)$ and set $r_0=a$ and $k=n-a$.
  
\item If $k=0$, return $(n,0,0)$.
  
\item If $a=0$, then $A\equiv 0\bmod p$, so $A=pT$ for
  $T\in \F^{n\times n}$.  Compute
  \tightdisplay{
    r_1=\rank_\F(A/p\bmod p),
  }
  and return $(0,r_1,n-r_1)$.
  
\item Choose fast unimodular $P,Q\in\GL_n(\Rhat)$ such that, for
  \[
    A'=PAQ=\begin{bmatrix}B&C\\D&E\end{bmatrix},
    \qquad B\in \Rhat^{a\times a},
  \]
  the matrix $B_0=B\bmod p$ is nonsingular.  Verify this condition
  probabilistically by computing $\rank_\F(B_0)$.  If the check
  reports failure, resample.

\item Choose the working block size
\tightdisplay{
  t=\min\left\{a,k,\ceil{n^{1/(\omega-1)}}\right\}.
}
This is the number of simultaneous right hand sides used in all
subsequent black-box queries to the Schur-complement oracle.  Thus the
outer rank computation will query the black box on matrices
$X\in\F^{k\times t}$, and each such query reduces internally to solves
with coefficient matrix $B_0$ and $t$ right hand sides.  The bounds
$t\le k$ and $t\le a$ ensure that the same block size is valid both
for the $k\times k$ matrix $T$ and for the internal $a\times a$ solves
with $B_0$.

\smallskip
The cap $t\le \ceil{n^{1/(\omega-1)}}$ is the value used in the
complexity analysis below; it balances the two main block-size
dependent terms, namely the repeated sparse/block-access term
$akn/t$ and the dense block-arithmetic term $ak t^{\omega-2}$.

\item Use the carry-safe oracle of Section~\ref{sec:T-oracle} to
supply, for every query matrix $X\in\F^{k\times t}$, the block product
\tightdisplay{
  X\longmapsto TX,\qquad
  T=(E-DB^{-1}C)/p\bmod p.
}
The matrix $T$ is not formed explicitly.

\item Compute $b=\rank_\F(T)$ using the rank-by-binary-search part of
  Corollary~\ref{cor:field-engine}, with the above block size $t$ and
  with access to $T$ provided by the oracle in Step~A7.
  
\item Return $(r_0,r_1,r_2)=(a,b,k-b)$.
\end{enumerate}

Correctness is established in the following theorem.

\begin{theorem}[Correctness]
  \label{thm:correctness}
  Assume the rank computations and sparse-block solves return correct
  results and the preconditioner in Step A5 succeeds.  Then the
  algorithm returns the Smith multiplicities of $A$ modulo $p^2$.
\end{theorem}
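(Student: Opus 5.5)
The plan is to follow the algorithm's branches and reduce each to the lemmas already proved. Throughout, I work over $\Rhat$ and invoke Lemma~\ref{lem:extend} once at the start. It says that the Smith multiplicities over $\Rhat$ coincide with those over $\R$, so it suffices to show that the returned triple is the Smith multiplicity triple over $\Rhat$. First, Step A2 is handled by the observation at the start of the local-algebra section. If $\SNF(A)=\diag(1^{r_0},p^{r_1},0^{r_2})$, then reducing a Smith decomposition modulo $p$ gives $\bar A\sim\diag(I_{r_0},0,0)$ over $\F$. Hence $r_0=\rank_\F(\bar A)=a$, which is correct under the hypothesis that rank computations are correct.

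Next I treat the two degenerate branches. If $k=0$, then $r_0=n$ forces $r_1=r_2=0$, so Step A3 returns $(n,0,0)$ correctly. If $a=0$, then $\bar A=0$, so every entry of $A$ lies in $p\Rhat$, and $A=pT$ with $T=A/p\bmod p$ well defined by the isomorphism $\Rhat/p\Rhat\cong p\Rhat$. I then apply the argument of Lemma~\ref{lem:divided-schur-rank} with $k=n$ and an empty unit block. Lifting $U_0,V_0\in\GL_n(\F)$ with $U_0TV_0=\diag(I_b,0)$ to $\GL_n(\Rhat)$ shows that $pT\sim\diag(pI_b,0)$. This gives $r_1=\rank_\F(T)$ and $r_2=n-r_1$, matching Step A4.

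For the general case $0<a<n$, I use the success of Step A5. This gives $P,Q\in\GL_n(\Rhat)$ with $B_0$ nonsingular, so $A'=PAQ$ has the same Smith form as $A$; the lifts are invertible because their determinants are units. Lemma~\ref{lem:schur} then gives $A'\sim\diag(B,J)$ with $B$ invertible and $J=pT$. Lemma~\ref{lem:divided-schur-rank} gives $r_1=\rank_\F(T)$ and $r_2=k-\rank_\F(T)$; note that $\diag(B,J)$ is a direct sum, so the unit block contributes exactly $a$ ones and no further $p$'s or zeros. It remains to see that Step A8 really computes $\rank_\F(T)$. By Lemma~\ref{lem:Toracle}, each oracle call in Step A7 returns exactly $TX$. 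That lemma relies on Lemma~\ref{lem:hensel} and on the correctness of the inner $B_0$-solves, which holds by hypothesis. So the rank routine of Corollary~\ref{cor:field-engine}(b), applied with this block oracle, sees precisely the linear map $T$. Under the hypothesis that it returns the correct rank, $b=\rank_\F(T)$, and the output $(a,b,k-b)$ is correct.

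I expect the main subtle point to be justifying that the rank routine, which was stated for a matrix given by vector products, is legitimately applied to $T$ accessed only through the $t$-column oracle. The argument I will give is that correctness of the rank routine depends only on the linear map queried, not on how products are implemented. The oracle is an exact, lift-independent implementation of $X\mapsto TX$, as shown at the end of the proof of Lemma~\ref{lem:Toracle}. The block size constraints $t\le k$ and $t\le a$ ensure that both the outer rank routine and the inner solves are applied within their valid parameter ranges. All probabilistic failures are excluded by the theorem's hypotheses, so no probability bookkeeping is needed here.
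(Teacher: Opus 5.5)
Your proposal is correct and follows essentially the same route as the paper's proof. Both arguments transfer multiplicities using Lemma~\ref{lem:extend} and the invertibility of the lifted preconditioners, then read off $r_0=a$, $r_1=\rank_\F(T)$ and $r_2=k-\rank_\F(T)$ from Lemma~\ref{lem:schur} and Lemma~\ref{lem:divided-schur-rank}; your explicit handling of the branches A3/A4, and your appeal to Lemma~\ref{lem:Toracle} and Lemma~\ref{lem:hensel} for the exactness of the $T$-oracle, simply fill in details the paper leaves implicit in ``Step A8 computes this rank.''
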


\begin{proof}
  Step A2 computes $a=\rank(A\bmod p)$, which equals the number of
  unit invariant factors.  Hence $r_0=a$.

  The preconditioners $P,Q$ are invertible over $\Rhat$, so $A$ and
  $A'=PAQ$ have the same Smith multiplicities over $\Rhat$, and
  therefore the same multiplicities as over $\R$.  Step A5 ensures
  that the leading block $B$ is invertible over $\Rhat$.  By
  Lemma~\ref{lem:schur},
  \tightdisplay{
    A'\sim \begin{bmatrix}B&0\\0&J\end{bmatrix},
    \qquad J=E-DB^{-1}C,
  }
  and $J=pT$ for a matrix $T\in \F^{k\times k}$.  The block $B$
  contributes exactly $a$ unit invariant factors.  By
  Lemma~\ref{lem:divided-schur-rank}, the block $J=pT$ contributes
  $\rank_\F(T)$ invariant factors equal to $p$ and $k-\rank_\F(T)$
  invariant factors equal to $0$.  Step A8 computes this rank, so the
  returned multiplicities are correct. \qed
\end{proof}

The complexity is established as follows.  Let $\omega$ be a feasible
exponent for dense matrix multiplication.  Let $\theta=1/(\omega-1)$
and $\tau=3-\theta=3-1/(\omega-1)$.  Note that for $2<\omega<3$, we have
$2<\tau<\omega$.

\paragraph*{Cost of one $T$-block application}

A block application $X\mapsto TX$ with $X\in \F^{k\times t}$ uses:

\begin{itemize}[leftmargin=2.1em]
\item a constant number of block products by $A'$ over $\Rhat$ with $t$
  columns, costing $\softO(nt)$; this includes the products used to
  form $C\widetilde X$, to compute the Hensel residual
  $U-B\widetilde Y_0$, and to form the final vector
  $A'[-Y;\widetilde X]$;
\item two solves with $B_0\in \F^{a\times a}$ and $t$ right hand sides;
\item sparse-block projection and matrix-polynomial work inside these
  solves.
\end{itemize}

A product by $B_0$ with $t$ columns costs $\softO(nt)$: embed the
$a\times t$ block into the first $a$ coordinates of an $n\times t$
block, apply $A'\bmod p$, and restrict to the first $a$ coordinates.
Therefore, by Corollary~\ref{cor:field-engine}, one solve costs
\[
  \softO\left(\frac{a}{t}\,nt+a^2+a t^{\omega-1}\right)
  =
  \softO\left(an+a^2+a t^{\omega-1}\right).
\]
Thus one $T$-block application costs
\[
  \softO\left(nt+an+a^2+a t^{\omega-1}\right).
\]
Since $a\le n$, the $a^2$ term is bounded by $an$, but we keep it visible for the derivation.

\subsection{Cost of computing rank $T$}

The rank computation for $T\in \F^{k\times k}$ is the solve-based
binary-search rank computation of
Theorem~\ref{thm:rank-binary-search}.  Each nonsingularity test is
applied to a leading submatrix of a fast preconditioning of $T$.  A
block product by such a tested submatrix is obtained by embedding into
the $k$ coordinates, applying the fast preconditioners, applying the
$T$-oracle, and restricting.  The rank-by-binary-search proof uses
$O(k/t)$ such $t$-column oracle calls, plus
$\softO(k^2+k t^{\omega-1})$ additional finite-field work.  Using the
$T$-block cost just derived, the cost after the Schur preconditioning
is
\begin{align*}
  C(a,k,t)
  & = \softO\left(
    \frac{k}{t}\left(nt+an+a^2+a t^{\omega-1}\right)
    +k^2+k t^{\omega-1}
    \right) \\
  & = \softO\left(
    kn+\frac{akn}{t}+\frac{a^2k}{t}
    +ak t^{\omega-2}+k^2+k t^{\omega-1}
    \right).
\end{align*}
Because $a\le n$,
\[
  \frac{a^2k}{t}\le \frac{akn}{t}.
\]
Thus
\[
 C(a,k,t)
 = \softO\left(
     kn+\frac{akn}{t}+ak t^{\omega-2}+k^2+k t^{\omega-1}
   \right).
\]
The initial rank computation, the verification of $B_0$, and the
setup/profile-check work for the fast preconditioners add
$\softO(n^2)$ operations, which is below the final bound.
Applications of the internal conditioners during finite-field rank and
solve routines are already included in the vector-product and $m^2$
sparse-projection accounting of Section~\ref{sec:finite-field-engine}.

\subsection{Choice of block size}

Choose
\[
  t=\min\left\{a,k,\ceil{n^\theta}\right\}, \qquad
  \theta=1/(\omega-1).
\]
We prove that the cost is $\softO(n^\tau)$ uniformly over all ranks
$a$.

\begin{lemma}[Uniform bound]\label{lem:uniform-bound}
  For all $a,k\ge0$ with $a+k=n$, and with $t$ chosen as above when
  $a,k>0$,
  \[
    C(a,k,t)=\softO(n^\tau),
    \qquad \tau=3-1/(\omega-1).
  \]
\end{lemma}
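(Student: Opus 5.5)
We bound the five terms of
\[
  C(a,k,t)=\softO\Bigl(kn+\tfrac{akn}{t}+ak t^{\omega-2}+k^2+k t^{\omega-1}\Bigr)
\]
separately, splitting into cases according to which of $a$, $k$, $\ceil{n^\theta}$ attains the minimum defining $t$. The degenerate cases $a=0$ or $k=0$ are handled directly by Steps A3--A4. For $k=0$ nothing beyond the $\softO(n^2)$ rank computation is done. For $a=0$ the single rank computation of Corollary~\ref{cor:field-engine} on $A/p\bmod p$ with block size $\ceil{n^\theta}$ costs $\softO(n^2+n\cdot n^{\theta(\omega-1)})=\softO(n^2)$. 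Since $\tau>2$, both are within the bound.

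For $a,k>0$, we first dispose of the terms that are harmless for any $t\le\min\{a,k,\ceil{n^\theta}\}$. Since $a,k\le n$ we have $kn+k^2=O(n^2)$. Also $t\le \ceil{n^\theta}$ gives $ak t^{\omega-2}\le n^2\cdot n^{\theta(\omega-2)}(1+o(1))$, and $\theta(\omega-2)=1-\theta$, so this term is $O(n^{3-\theta})=O(n^\tau)$. Likewise $k t^{\omega-1}\le n\cdot O(n^{\theta(\omega-1)})=O(n^2)$. Hence all terms except $akn/t$ are $O(n^\tau)$ regardless of which case holds.

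The remaining work, and the only step I expect to need care, is the term $akn/t$. We treat the three cases in turn.
\begin{itemize}
\item If $t=\ceil{n^\theta}$, then $akn/t\le n^3/n^\theta=n^\tau$ immediately.
\item If $t=a<\ceil{n^\theta}$, then $akn/t=kn\le n^2$.
\item If $t=k<\ceil{n^\theta}$, then $akn/t=an\le n^2$.
\end{itemize}
In every case $akn/t=O(n^\tau)$, because $\tau\ge 2$. That $\tau>2$ holds follows from $\omega>2$, which gives $\theta<1$. We should also note that when $t=a$ or $t=k$ is small, the term $ak t^{\omega-2}$ only shrinks, so nothing new arises there.

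Finally, we collect the bounds, together with the $\softO(n^2)$ setup cost noted before the lemma, to conclude $C(a,k,t)=\softO(n^\tau)$ uniformly in $a$. The ceiling in $\ceil{n^\theta}$ changes everything by only a constant factor, since $\ceil{n^\theta}\le 2n^\theta$.
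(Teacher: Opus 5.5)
Your proposal is correct and follows essentially the paper's argument: you bound $kn$, $k^2$ and $kt^{\omega-1}$ by $O(n^2)$, then split on whether $t$ equals the cap $\ceil{n^\theta}$ or equals $\min\{a,k\}$, in which case $akn/t$ collapses to $\max\{a,k\}\,n\le n^2$. The differences are only cosmetic: you bound $ak\,t^{\omega-2}$ uniformly via $t\le\ceil{n^\theta}$ and $\theta(\omega-2)=1-\theta$ rather than case by case, and you treat the degenerate cases $a=0$ and $k=0$ explicitly, which the paper leaves implicit.
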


\begin{proof}
  The terms $kn$, $k^2$, and the initial $n^2$ overhead are all
  $O(n^2)$.  Also
  \[
    k t^{\omega-1}\le \softO\left(n\cdot (n^\theta)^{\omega-1}\right)=\softO(n^2).
  \]
  It remains to bound
  \[
    \frac{akn}{t}
    \quad\text{and}\quad
    ak t^{\omega-2}.
  \]
  
  First suppose $\min\{a,k\}\ge n^\theta$.  Then $t=\Theta(n^\theta)$,
  and since $ak\le n^2$,
  \[
    \frac{akn}{t}\le \softO(n^{3-\theta})=\softO(n^\tau).
  \]
  Similarly,
  \[
    ak t^{\omega-2}
    \le \softO(n^2\cdot n^{\theta(\omega-2)})
    =\softO\left(n^{2+(\omega-2)/(\omega-1)}\right)
    =\softO\left(n^{3-1/(\omega-1)}\right)
    =\softO(n^\tau).
  \]

  Now suppose $m=\min\{a,k\}<n^\theta$.  Then $t=m$.  Since
  $ak/m=\max\{a,k\}\le n$,
  \[
    \frac{akn}{t}=\frac{akn}{m}\le n^2.
  \]
  Also
  \[
    ak t^{\omega-2}=ak m^{\omega-2}
    =\max\{a,k\}\, m^{\omega-1}
    \le \softO\left(n\,(n^\theta)^{\omega-1}\right)
    =\softO(n^2).
  \]
  Since $\tau>2$ for $\omega>2$, these terms are also
  $O(n^\tau)$. \qed
\end{proof}

The overall cost is summarized as follows.

\begin{theorem}[Cost of computing the Smith multiplicities]
  \label{thm:main-complexity}
  Using the sparse-block algorithm summarized in
  Section~\ref{sec:finite-field-engine} above and the fast
  rank-profile preconditioner of
  Lemma~\ref{lem:global-preconditioner}, suppose the product
  $v\mapsto Av$ costs $\softO(n)$ operations.  An $n\times s$ product
  is obtained by applying this oracle to each column, and costs
  $\softO(ns)$.

  Then the Smith multiplicities of any $A\in\R^{n\times n}$ can be
  computed by a randomized algorithm which returns the correct
  multiplicities with probability at least $1-\epsilon$ and whose
  expected number of operations over the working ring $\Rhat$ is
  \[
    \softO\left(n^{3-1/(\omega-1)}\right).
  \]
  For fixed or word-sized $p$, and with
  $[\Rhat:\R]=O(\log n+\log(1/\epsilon))$, this is also
  \[
    \softO\left(n^{3-1/(\omega-1)}\right)
  \]
  operations over the base ring $\R$.
\end{theorem}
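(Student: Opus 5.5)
The proof assembles the pieces already established: correctness from Theorem~\ref{thm:correctness}, cost from the per-step accounting together with Lemma~\ref{lem:uniform-bound}, and the failure probability from a union bound. I will go through Steps A1--A9 in order, bound the cost of each, and fix a failure budget $\epsilon/c$ for each of a constant number of randomized components.

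First, Step A1 chooses $d=O(\log n+\log(1/\epsilon))$. Lemma~\ref{lem:extend} guarantees the multiplicities are unchanged. As discussed in Section~2, a black-box product over $\Rhat$ costs $d$ products over $\R$ plus $\softO(nd)$ conversion, so it is still $\softO(n)$.

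Step A2 applies Theorem~\ref{thm:rank-binary-search} to $A\bmod p$ with $L=n$ and block size $s=\min\{n,\ceil{n^{\theta}}\}$. This costs $\softO(n^2+n\cdot n^{\theta(\omega-1)})=\softO(n^2)$. The same choice handles the degenerate Step A4 for $A/p\bmod p$. A product by $A/p\bmod p$ is obtained by lifting, applying $A$, and dividing by $p$, which is valid since $A\equiv0\bmod p$, so it also costs $\softO(n)$.

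Step A5 uses Lemma~\ref{lem:global-preconditioner} lifted to $\Rhat$, plus one rank check of $B_0$ costing $\softO(n^2)$. Each sample succeeds with probability at least $1-\epsilon$, so the expected number of resamplings is $O(1)$.

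Steps A7--A8 cost $C(a,k,t)$ as derived in the text. Lemma~\ref{lem:uniform-bound} bounds this by $\softO(n^\tau)$. Since $\tau>2$, all the $\softO(n^2)$ terms are absorbed.

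For the probability bound, the randomized components are:
\begin{itemize}
\item the rank of $A\bmod p$;
\item the preconditioner event, together with the Monte Carlo check of $B_0$;
\item the rank of $T$, including correctness of every internal $B_0$ solve inside the oracle.
\end{itemize}
The internal solves are Las Vegas, since each is verified by substitution via Theorem~\ref{thm:param-solve}. They therefore affect only the expected running time, not correctness. One caveat: if the $B_0$ check falsely accepts a singular $B_0$, the solves could loop forever. I will cap each solve at a fixed number of attempts and count a timeout as a failure charged to the $\epsilon$ budget.

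Setting each Monte Carlo routine's error to $\epsilon/3$ costs only a $\log(1/\epsilon)$ factor through the extension degree and repetition count of Theorem~\ref{thm:rank-binary-search}. On the complement of the failure events, the hypotheses of Theorem~\ref{thm:correctness} hold, so the output is correct with probability at least $1-\epsilon$.

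The main obstacle is the rank of $T$. Theorem~\ref{thm:rank-binary-search} is stated for a matrix with an $\softO(L)$ vector product, but here only a $t$-column block oracle is available. Its cost is $\softO(nt+an+a^2+at^{\omega-1})$ per call and is not linear in the number of columns. So the theorem cannot be invoked verbatim. I would re-run its proof and check three things:
\begin{itemize}
\item every black-box access in the binary search, namely the Krylov blocks for the Hankel data, $K^{(\ell)}C$, and the $K^{(r)}$ evaluation, occurs in $t$-column batches, $O(k/t)$ per test, with $O(\log k)$ tests;
\item the PRECONDSXS conditioners and the restriction to leading submatrices preserve the $t$-column batching;
\item the remaining work is $\softO(k^2+kt^{\omega-1})$.
\end{itemize}
This is exactly the counting asserted before Lemma~\ref{lem:uniform-bound}. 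Once it holds, the bound $C(a,k,t)$ follows. The final conversion to operations over $\R$ multiplies by $\mathrm{poly}(d)$, which is absorbed into the $\softO$.
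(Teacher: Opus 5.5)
Your proposal is correct and follows the same route as the paper's proof. Correctness comes from Theorem~\ref{thm:correctness}, the dominant cost $C(a,k,t)$ is bounded by Lemma~\ref{lem:uniform-bound}, the initial rank, the $B_0$ check and the preconditioner setup cost $\softO(n^2)$, and the extension and probability amplification add only polylogarithmic factors. Your two extra points are sound refinements of details the paper leaves implicit: verifying that the rank of $T$ needs only $O(k/t)$ batched $t$-column oracle calls per test (the paper asserts this in the text rather than deriving it from Theorem~\ref{thm:rank-binary-search}), and capping the Las Vegas $B_0$ solves so that a falsely accepted singular $B_0$ cannot make the expected cost unbounded.
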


\begin{proof}
  Correctness follows from Theorem~\ref{thm:correctness}.  The cost of
  the main rank computation is bounded by
  Lemma~\ref{lem:uniform-bound}.  The initial rank computations, the
  $B_0$ check, and the setup/profile-check work for the global
  preconditioners cost $\softO(n^2)$.  Applications of global and
  internal conditioners inside the finite-field rank and solve
  routines are part of the vector-product and sparse-projections in
  Section~\ref{sec:finite-field-engine}.  The field extension and
  failure-probability amplification add only polylogarithmic factors
  hidden in the soft-O. \qed
\end{proof}

\section{Generalization to SNF modulo $p^e$}

The Schur/divide identity used above can be extended to higher
powers of $p$.  Let $A$ be a matrix over $\ZZ/p^e\ZZ$, or over the
corresponding unramified Galois ring of length $e$.  After row and
column preconditioning, suppose
\[
  A=
  \begin{bmatrix}
    B & C\\
    D & E
  \end{bmatrix},
  \qquad
  B\bmod p \text{ nonsingular},
\]
where the size of $B$ is $\rank(A\bmod p)$.  Then $B$ is invertible
and $A$ is equivalent to
\[
  \begin{bmatrix}
    B & 0\\
    0 & J
  \end{bmatrix},
  \qquad
  J=E-DB^{-1}C .
\]
Moreover $J\equiv 0\bmod p$, so we can divide $J$ by $p$ and continue
over the shorter ring $\ZZ/p^{e-1}\ZZ$.  Thus the same algebra gives a
recursive description of the Smith multiplicities.  This recursion
does not carry over to the complexity result of this paper.  For $e=2$
there is only one divided Schur complement to rank, and its block
products are obtained through a single layer of carry-safe solves with
the unit block $B$.  For $e\ge3$, the next divided Schur complement
already has products that are themselves defined through lower-level
solves.  A direct black-box recursion therefore nests the rank and
solve procedures, rather than preserving the single-level amortization
used above.

Consequently, the straightforward recursive use of the present method
does not give a worst-case bound below the dense $\softO(n^\omega)$
benchmark for higher powers of $p$.  The results of this paper are
therefore stated only for length-two rings.   The corresponding
sub-$n^\omega$ sparse SNF problem for $\ZZ/p^e\ZZ$, $e\geq 3$,
remains open.

\section{Conclusion}

We have combined a number of modern iterative block matrix tools with
rank-profile Schur splitting over $\R=\ZZ/p^2\ZZ$ and carry-safe
Hensel solves to create a new Monte Carlo algorithm for sparse Smith
form modulo $p^2$ which requires an expected
\[
  \softO(n^{3-\frac{1}{\omega-1}})
  \hspace*{10pt}
  \mbox{operations in $\R$.}
\]
When using standard cubic matrix arithmetic with $\omega=3$ this cost
is $\softO(n^{2.5})$ operations in $\R$, whereas with the best
currently known $\omega\approx 2.371339$ our algorithm requires
$\softO(n^{2.270786})$ operations in $\R$, which has exponent well
below $\omega$.

The key points are that the Schur complement is computed only through
a block oracle, all divisions by $p$ occur only after divisibility is
established in the ring, and sparse-block projections are used inside
the EGGSV solve tests so that the projection cost is $\softO(m^2)$
rather than $\softO(m^2t)$.

For matrices modulo $p^e$ the same Schur/divide approach is valid, but
the simple recursive sparse-block complexity analysis hits an
$n^\omega$ barrier already at $e=3$.  Getting past this appears to
require a new multilevel blocking or filtered-module Krylov idea, not
merely a black-box recursive invocation of the $p^2$ algorithm.

The algorithm presented here is of the Monte Carlo type, with a
controllably small probability of error. It would be interesting to
find a Las Vegas algorithm with this same complexity, perhaps
following the certified rank techniques of \cite{EbeGie07}.

\section*{Acknowledgements}
The author would like to thank the anonymous referees for their input.  The author acknowledges support of the Natural Sciences and Engineering Research Council (NSERC), Canada.

\patchcmd{\thebibliography}
  {\list}
  {\vspace{-3em}\list}
  {}
  {}
  
\begingroup
\renewcommand{\bibfont}{\small}
\setlength{\bibsep}{0pt}
\bibliographystyle{plainnat}

\renewcommand{\doi}[1]{%
  \href{https://doi.org/#1}{doi:\footnotesize\nolinkurl{#1}}%
}

\bibliography{snfp2}

@inproceedings{ElsGie12,
  author    = {Mustafa Elsheikh and Mark Giesbrecht and
               Andy Novocin and B. David Saunders},
  title     = {Fast Computation of {S}mith Forms of Sparse Matrices Over Local Rings},
  booktitle = {Proceedings of the International Symposium on
               Symbolic and Algebraic Computation (ISSAC 2012)},
  pages      = {146--153},
  year       = {2012},
  doi        = {10.1145/2442829.2442853}
}

@inproceedings{EbeGie06,
  author    = {Eberly, Wayne and Giesbrecht, Mark and Giorgi, Pascal and Storjohann, Arne and Villard, Gilles},
  title     = {Solving Sparse Rational Linear Systems},
  booktitle = {Proc. 2006 International Symposium on Symbolic and Algebraic Computation},
  series    = {ISSAC '06},
  pages     = {63--70},
  year      = {2006},
  doi       = {10.1145/1145768.1145785}
}

@inproceedings{EbeGie07,
  author    = {Eberly, Wayne and Giesbrecht, Mark and Giorgi, Pascal and Storjohann, Arne and Villard, Gilles},
  title     = {Faster Inversion and Other Black Box Matrix Computations Using Efficient Block Projections},
  booktitle = {Proceedings of the 2007 International Symposium on Symbolic and Algebraic Computation},
  series    = {ISSAC '07},
  pages     = {143--150},
  publisher = {ACM},
  address   = {New York, NY, USA},
  year      = {2007},
  doi       = {10.1145/1277548.1277569}
}

@article{CheEbe02,
  author  = {Chen, L. and Eberly, Wayne and Kaltofen, Erich and Saunders, B. David and Turner, William J. and Villard, Gilles},
  title   = {Efficient Matrix Preconditioners for Black Box Linear Algebra},
  journal = {Linear Algebra and its Applications},
  volume  = {343--344},
  pages   = {119--146},
  year    = {2002},
  doi     = {10.1016/S0024-3795(01)00472-4}
}

@article{Wie86,
  author  = {Wiedemann, Douglas H.},
  title   = {Solving Sparse Linear Equations over Finite Fields},
  journal = {IEEE Transactions on Information Theory},
  volume  = {32},
  number  = {1},
  pages   = {54--62},
  year    = {1986},
  doi     = {10.1109/TIT.1986.1057137}
}

@inproceedings{KalSau91,
  author    = {Kaltofen, Erich and Saunders, B. David},
  title     = {On {Wiedemann}'s Method of Solving Sparse Linear Systems},
  booktitle = {Applied Algebra, Algebraic Algorithms and Error-Correcting Codes},
  series    = {Lecture Notes in Computer Science},
  volume    = {539},
  pages     = {29--38},
  publisher = {Springer},
  year      = {1991},
  doi       = {10.1007/3-540-54522-0_93}
}

@article{ChoCol82,
    title = {Algorithms for the Solution of Systems of Linear Diophantine Equations},
    author = {Tsu-Wu J. Chou and G. E. Collins},
    publisher = {SIAM},
    year = {1982},
    journal = {SIAM Journal on Computing},
    volume = {11},
    number = {4},
    pages = {687--708},
    doi = "10.1137/0211057"
}

@book{Hu69,
  title={Integer programming and network flows},
  author={Hu, T. C.},
  year={1969},
  publisher={Addison-Wesley},
  address = {Reading, Mass.},
}

@incollection{DumHec03,
    author = {J.-G. Dumas and F. Heckenbach and B. D. Saunders and V. Welker},
    title = {Computing Simplicial Homology Based on Efficient {Smith} Normal Form Algorithms},
    booktitle = {Algebra, Geometry, and Software Systems},
    publisher = {Springer},
    pages = {177--206},
    year = {2003},
    doi = "10.1007/978-3-662-05148-1_10"
}

@book{New72,
 author = {Newman, M.},
 title = {Integral Matrices},
 year = {1972},
 publisher = {Academic Press},
 address = {New York, NY, USA},
}

@article{Haf89,
     title = {A Rigorous Subexponential Algorithm For Computation of Class Groups},
     author = {Hafner, J. L. and McCurley, K. S.},
     journal = {J. American Mathematical Society},
     volume = {2},
     number = {4},
     pages = {837--850},
     year = {1989},
     publisher = {American Mathematical Society},
     doi = {10.1090/S0894-0347-1989-1002631-0},
}

@book{Kai80,
  title={Linear Systems},
  author={Kailath, T.},
  year={1980},
  publisher={Prentice-Hall},
  address = {Englewood Cliffs, NJ}
}

@article{Smi61,
     title = {On Systems of Linear Indeterminate Equations and Congruences},
     author = {Smith, H. J. S.},
     journal = {Phil. Trans. of the Royal Society of London},
     volume = {151},
     number = {},
     pages = {293--326},
     year = {1861},
     publisher = {The Royal Society},
     doi      = {10.1098/rstl.1861.0016},
}

@article{Kap49,
     title = {Elementary Divisors and Modules},
     author = {Kaplansky, I.},
     journal = {Transactions of the American Mathematical Society},
     volume = {66},
     number = {2},
     pages = {464--491},
     year = {1949},
     publisher = {American Mathematical Society},
     doi      = {10.2307/1990591},
}

@article{Sta16,
 title = {Smith normal form in combinatorics},
 volume = {144},
 doi = {10.1016/j.jcta.2016.06.013},
 journal = {Journal of Combinatorial Theory, Series A},
 author = {Stanley, Richard P.},
 year = {2016},
 pages = {476--495},
}

@inproceedings{AlmDua25,
author = {Josh Alman and Ran Duan and Virginia Vassilevska Williams and Yinzhan Xu and Zixuan Xu and Renfei Zhou},
title = {More Asymmetry Yields Faster Matrix Multiplication},
booktitle = {Proceedings of the 2025 Annual ACM-SIAM Symposium on Discrete Algorithms (SODA)},
pages = {2005--2039},
doi = {10.1137/1.9781611978322.63},
year = 2025,
}

@article{GiuGhr16,
 title = {Two’s company, three (or more) is a simplex},
 author = {Giusti, Chad and Ghrist, Robert and Bassett, Danielle S.},
 volume = {41},
 doi = {10.1007/s10827-016-0608-6},
 number = {1},
 journal = {Journal of Computational Neuroscience},
 year = {2016},
 pages = {1--14},
}

@inproceedings{Sto96,
  author    = {Storjohann, Arne},
  title     = {Near Optimal Algorithms for Computing {Smith} Normal Forms of Integer Matrices},
  booktitle = {Proceedings of the 1996 International Symposium on Symbolic and Algebraic Computation},
  series    = {ISSAC '96},
  pages     = {267--274},
  publisher = {ACM},
  year      = {1996},
  doi       = {10.1145/236869.237084}
}
\endgroup

\end{document}